\newtheorem{thm}{Theorem}[section]
\newtheorem{lemma}[thm]{Lemma}
\newtheorem{prop}[thm]{Proposition}
\newtheorem{cor}[thm]{Corollary}
\newtheorem{exam}[thm]{Example}
\newcommand{\R}{{\mathbb{R}}}
\newcommand{\N}{{\mathbb{N}}}
\newcommand{\C}{{\mathbb{C}}}
\newcommand{\La}{{\Lambda}}
\newcommand{\cB}{{\mathcal{B}}}
\newcommand{\cF}{{\mathcal{F}}}
\newcommand{\cI}{{\mathcal{I}}}
\newcommand{\cJ}{{\mathcal{J}}}
\newcommand{\cL}{{\mathcal{L}}}
\newcommand{\cS}{{\mathcal{S}}}
\newcommand{\cU}{{\mathcal{U}}}
\def\id{{1\hskip-2.5pt{\rm l}}}
\newcommand{\supp}{{\it supp\,}}
\newcommand{\al}{{\alpha}}
\newcommand{\la}{{\lambda}}
\newcommand{\hb}{{\hbar}}
\newcommand{\Ci}{{\mathcal{C}}^{\infty}}
\newcommand{\op}{\operatorname}
\newcommand{\con}{\overline}
\newcommand{\bigo}{\mathcal{O}}
\newcommand{\Hilb}{\mathcal{H}}
\newcommand{\tr}{{{\rm trace}}}
\newcommand{\PP}{{\mathbb{P}}}
\newcommand{\liouv}{\nu}
\newcommand{\liouvnorm}{\mu}
\begin{document}

\title{Inferring topology of quantum phase space}

\renewcommand{\thefootnote}{\alph{footnote}}

\author{\textsc Leonid Polterovich $^{a}$ \\
with an appendix by Laurent Charles }

\footnotetext[1]{Partially supported by the European Research Council Advanced grant 338809 and
by SFB/Transregio 191 of the Deutsche Forschungsgemeinschaft.}

\date{\today}

\maketitle

\begin{abstract} Does a semiclassical particle remember the phase space topology?
We discuss this question in the context of the Berezin-Toeplitz quantization and quantum measurement theory
by using tools of topological data analysis. One of its facets  involves
a calculus of Toeplitz operators with piecewise constant symbol  developed
in an appendix by Laurent Charles.
\end{abstract}

\tableofcontents

\section{Introduction}
We discuss an application of persistent homology to quantum-classical
correspondence, a fundamental principle stating that quantum mechanics contains classical one
as the limiting case when the Planck constant $\hbar$ tends to $0$. Paraphrasing \cite[Chapter 7]{St}, this means that in this limit,  mathematical structures of Hamiltonian dynamics on closed symplectic manifolds are hidden somewhere in the linear algebra and probability of matrix quantum mechanics. While a vast literature explores quantum footprints of  dynamical phenomena such as transition to chaos \cite{St}, not much is known about the ones of phase space topology. In fact, on the quantum side, topological phenomena can be elusive. For instance, according to the tunneling effect \cite{Me},
a quantum particle may commute between different connected components of an energy level set in the phase space, thus getting wrong about topology of this set.

In the present note we make a step towards understanding quantum footprints of phase space topology
in the framework of quantum measurement theory. We deal with the following problem. Suppose that
the phase space of a semiclassical system is equipped with a finite collection
of sensors. An experimentalist, having an access to an ensemble of particles, registers each of them
at a ``nearby" sensor and gathers the statistics. Is it possible to guess homology of the phase space
on the basis of the data obtained? We shall show, within a specific probabilistic model of registration,
that such a topological inference is feasible provided the sensors form a sufficiently dense net in the phase space and the experimentalist can tune the range of the sensors. Roughly speaking, a sensor $z$
has range $a$ if it cannot detect particles located at distance $\geq a$ from $z$.
Here is a sketch of the proposed recipe.

\begin{itemize}
\item[{{\bf 1.}}] Take two special values of the range,
$a < b$, which depend on local geometry of the sensor network, and which are assumed to be known in advance.
 Then, for each value of the range $s \in \{a,b\}$ perform two consecutive registrations and calculate, for all distinct sensors $z$ and $w$, the probability $p_{zw,s}$ of  registration at $z$ and $w$.
\item[{{\bf 2.}}] Fix real $m >0 $. Form a pair of simplicial complexes $Q_s$, $s \in \{a,b\}$ whose vertices are the sensors,
and whose simplices are formed by subsets $\sigma$ of the vertices with $p_{zw,s} > \hbar^m$ for all distinct $z,w \in \sigma$. Here we assume that the Planck constant $\hbar$ is small enough.
\item[{{\bf 3.}}] With a right choice of $a$ and $b$, $Q_a$ will be a subset of $Q_b$. The image
of the inclusion morphism in homology $H(Q_a) \to H(Q_b)$ turns out to be isomorphic to the homology
of the phase space.
\end{itemize}

\noindent
For a rigorous statement, see Theorem \ref{thm-main} which is formulated in
Section \ref{sec-tda} and proved in Section \ref{sec-php}. These sections
contain brief preliminaries on persistent homology and relevant methods of
topological data analysis. The mathematical model of registration is discussed in Sections \ref{sec-reg-1} and \ref{sec-reg-pr}. Section \ref{sec-nerve} discusses a possible strategy of inferring the homotopy type of the phase space. It involves
a calculus of Toeplitz operators with piecewise constant symbol developed in an appendix by
Laurent Charles.
The note ends with an outline of further directions. Let us pass to precise formulations.

\section{Registration: setting the stage} \label{sec-reg-1}
We assume that the phase space is given by a closed symplectic manifold $(M,\omega)$.
Our model of the registration procedure \cite{P-Unsh} involves
a finite open cover $\cU=\{U_1,...,U_N\}$ of $M$, considered as a small scale coarse-graining of $M$,
and a subordinated partition of unity $f_1,\ldots,f_N$. Here $f_i: M \to [0,1]$ is a smooth
function supported in $U_i$ so that $\sum_i f_i=1$.

\medskip

{\it The classical registration procedure} assigns to every point $z \in M$ unique $i \in \Omega_N:=\{1,...,N\}$ with probability $f_i(z)$. Thus it provides an answer to the question
``to which of the sets $U_i$ does it belong?" While the question is ambiguous  due to overlaps
between the sets $U_i$, the ambiguity is resolved with the help of the partition of unity.
Since $f_i$ is supported in $U_i$, the output is ``the truth, but not the whole truth".

Recall that {\it a classical state} is a Borel probability measure, say $\nu$ on $M$. One readily calculates (see Section \ref{sec-reg-pr} below) the probability $p^C_i$ that the classical system prepared in a state $\nu$ is registered in $U_i$:
\begin{equation}
\label{eq-1}
p^C_i =\int f_i d\nu\;,
\end{equation}
where $C$ stands for classical.

This registration procedure has a natural  quantum version in the context of the Berezin-Toeplitz quantization. To this end we assume that the symplectic structure $\omega$  is {\it quantizable}, i.e.,   its cohomology class  divided by $2\pi$ is integral. The reader is invited to think of the 2-dimensional sphere of area $2\pi$, which appears as the classical limit of the high spin quantum particle. The Berezin-Toeplitz quantization
is given by a family of finite dimensional complex Hilbert spaces $\Hilb_\hbar$, where $\hbar$ is
the Planck constant considered as a small parameter, together with linear maps $T_\hbar$ from $\Ci(M)$ to the space of Hermitian operators $\cL(\Hilb_\hbar)$. The maps $T_\hbar$, which are defined by the integration
against an $\cL(\Hilb_\hbar)$-valued positive operator valued measure (POVM), satisfy the quantum-classical correspondence principle which will be reminded in Section \ref{sec-reg-pr}. The quantum counter-part of the partition of unity $f_1,...,f_N$ is a POVM on a finite set $\{1,\dots,N\}$ consisting of Hermitian operators $F_{1,\hbar},...,F_{N,\hbar}$ with $$F_{j,\hbar}=T_\hbar(f_j) \in \cL(\Hilb_\hbar)\;.$$

Let $\cS_\hbar$ be the set of quantum states, i.e., the positive trace $1$ operators from $\cL(\Hilb_\hbar)$. Recall that every quantum state $\rho \in \cS_\hbar$ determines a Borel probability measure $\mu_{\rho,\hbar}$ on $M$ by the formula $$\int f d\mu_{\rho,\hbar} = \tr (T_\hbar(f)\rho) \;\; \forall f \in \Ci(M)\;.$$ Intuitively speaking,
the measure $\mu_{\rho,\hbar}$, which is called the Husimi measure, governs the distribution of a quantum particle in the phase space. By definition, the probability $p^Q_{i,\hbar}$ of the quantum registration at $i$ provided the system is at the state $\rho$ equals
\begin{equation}
\label{eq-1-Q}
p^Q_{i,\hbar} = \tr(F_{i,\hbar} \rho) = \int f_i d\mu_{\rho,\hbar}\;,
\end{equation}
in an analogy with the classical equation \eqref{eq-1}. Here $Q$ stands for quantum.

A key role in our strategy of phase space learning is played by {\it repeated registration}:
prepare the system in the ``maximally mixed state" $\rho_0$ and make two consecutive independent registrations.
In the classical case the state $\rho_0$ is the normalized symplectic volume $\mu$, and in the quantum case $\rho_0 = \frac{1}{\dim \Hilb_\hbar}\id$.

In the classical case,
the probability of the outcome $ij$ of the repeated registration equals
\begin{equation}\label{eq-C2}  p^{C}_{ij}= \int_M f_if_j d\mu\;,\end{equation}
see Section \ref{sec-reg-pr}.

In the quantum case, in order to calculate the probability $p^{Q}_{ij,\hbar}$
of the outcome $ij$ of the repeated registration one has to take into account the quantum
state reduction. We assume that the latter is described by the L\"{u}ders rule, which will be reminded below in Section \ref{sec-reg-pr}.
We denote by $\bigo(\hbar^\infty)$ a sequence which decays faster than any power of $\hbar$.

\begin{prop}\label{probrepeatedq} Assume that the system is prepared in the maximally mixed state.
Then
for the repeated registration
\begin{equation}\label{eq-Q2}  p^{Q}_{ij,\hbar}=
\frac{\tr (F_{i,\hbar} F_{j,\hbar})}{\dim \Hilb_\hbar}= p^{C}_{ij} + \bigo(\hbar)\;.
\end{equation}
Furthermore, if the supports of $f_i$ and $f_j$ are disjoint, $p^{Q}_{ij,\hbar}= \bigo(\hbar^\infty)$.
\end{prop}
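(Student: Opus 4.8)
The plan is to first reduce the repeated-registration probability to a single operator expression using the L\"uders rule, exploiting the special structure of the maximally mixed state, and only then to invoke the correspondence principle to pass to the classical limit.

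First I would write out the L\"uders rule for the POVM $\{F_{i,\hbar}\}$. Under this rule, registration of outcome $i$ on a state $\rho$ returns the post-measurement state $F_{i,\hbar}^{1/2}\rho F_{i,\hbar}^{1/2}/\tr(F_{i,\hbar}\rho)$, so the joint probability of the ordered outcome $ij$ in the repeated registration is
\[
p^{Q}_{ij,\hbar}=\tr(F_{i,\hbar}\rho_0)\cdot\frac{\tr\!\left(F_{j,\hbar}F_{i,\hbar}^{1/2}\rho_0 F_{i,\hbar}^{1/2}\right)}{\tr(F_{i,\hbar}\rho_0)}=\tr\!\left(F_{j,\hbar}F_{i,\hbar}^{1/2}\rho_0 F_{i,\hbar}^{1/2}\right).
\]
The crucial simplification comes from the choice $\rho_0=\id/\dim\Hilb_\hbar$: since $\rho_0$ is a scalar multiple of the identity it commutes with $F_{i,\hbar}^{1/2}$, so $F_{i,\hbar}^{1/2}\rho_0 F_{i,\hbar}^{1/2}=\rho_0 F_{i,\hbar}$ and the two square roots collapse back into $F_{i,\hbar}$. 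Cyclicity of the trace then yields the first, \emph{exact}, equality $p^{Q}_{ij,\hbar}=\tr(F_{i,\hbar}F_{j,\hbar})/\dim\Hilb_\hbar$; note that symmetry in $i,j$ is automatic here.

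Next I would use two standard ingredients of the quantum--classical correspondence (recalled in Section \ref{sec-reg-pr}): the product estimate $T_\hbar(f)T_\hbar(g)=T_\hbar(fg)+R_\hbar$ with $\|R_\hbar\|=\bigo(\hbar)$, and the trace--volume asymptotics $\tr(T_\hbar(f))/\dim\Hilb_\hbar=\int_M f\,d\mu+\bigo(\hbar)$. Writing $F_{i,\hbar}F_{j,\hbar}=T_\hbar(f_if_j)+R_\hbar$ and taking normalized traces, the main term gives $\int_M f_if_j\,d\mu+\bigo(\hbar)=p^{C}_{ij}+\bigo(\hbar)$ by \eqref{eq-C2}, while the remainder is controlled by $|\tr(R_\hbar)|/\dim\Hilb_\hbar\le\|R_\hbar\|=\bigo(\hbar)$, using that the trace of an operator is bounded by its dimension times its operator norm. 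This establishes the second equality.

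For the \emph{furthermore} statement I would invoke the sharper, microlocal form of the correspondence: when $\supp f_i\cap\supp f_j=\emptyset$, the off-diagonal decay of the Bergman/Szeg\H{o} kernel forces $\|T_\hbar(f_i)T_\hbar(f_j)\|=\bigo(\hbar^\infty)$. Feeding this into the trace bound $|\tr(F_{i,\hbar}F_{j,\hbar})|\le\dim\Hilb_\hbar\cdot\|F_{i,\hbar}F_{j,\hbar}\|$ and dividing by $\dim\Hilb_\hbar$ gives $p^{Q}_{ij,\hbar}=\bigo(\hbar^\infty)$. The main obstacle is precisely this last locality estimate: the leading-order product formula only delivers $\bigo(\hbar)$ accuracy, so to reach decay faster than every power of $\hbar$ one genuinely needs the full off-diagonal exponential decay of the quantization kernel (equivalently, that $T_\hbar(f)T_\hbar(g)$ is smoothing to all orders for functions with disjoint supports), a strictly stronger input than the first-order correspondence used for the $\bigo(\hbar)$ claim.
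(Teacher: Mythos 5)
Your proposal is correct and follows essentially the same route as the paper's proof: the L\"uders rule combined with the fact that $\rho_0$ is a scalar multiple of the identity yields the exact equality $p^{Q}_{ij,\hbar}=\tr(F_{i,\hbar}F_{j,\hbar})/\dim\Hilb_\hbar$, quasi-multiplicativity plus the trace correspondence give the $\bigo(\hbar)$ estimate, and the off-diagonal (coherent-state) decay handles the disjoint-support case. Your explicit remark that the remainder is controlled via $|\tr(R_\hbar)|\le\dim\Hilb_\hbar\cdot\|R_\hbar\|$, and your identification of the $\bigo(\hbar^\infty)$ locality estimate as a genuinely stronger input than first-order quasi-multiplicativity, only make explicit what the paper leaves implicit.
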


\medskip
\noindent
The proof is given in  Section \ref{sec-reg-pr}.

\section{Topological analysis of registration data} \label{sec-tda}
Fix a Riemannian metric on the phase space $M$. Denote by $d$
the corresponding distance function and by $B(z,r)$ the open metric ball of radius
$r$ centered in $z \in M$. Take $r'>0$ such that every ball of radius $4r'$ in $M$ is a convex normal neighbourhood of its center (that is each two of its points can be connected by the unique globally minimal geodesics which is fully contained in the ball).

The role of a sensor network mentioned in the introduction will be played by a finite
subset $Z \subset M$. We assume that $Z$ is {\it an $r/2$-net}
in $M$, $r >0$, that is the balls $B(z,r/2)$, $z \in Z$  cover $M$.
Furthermore we fix a constant $\lambda > 1$ and assume that
\begin{equation}
\label{eq-rrprim}
r'/r > 4\lambda^4\;.
\end{equation}
Consider the cover $\cU_{\epsilon}$ of $M$ by the balls of radii $\lambda\epsilon/2$ centered in the points of $Z$, where
$$ \epsilon \in \cI:= [2r\lambda; 2r'/\lambda]\;.$$
Take any partition of unity $\{f^{(\epsilon)}_z\}$, $z \in Z$,  subordinated to $\cU_{\epsilon}$ with the following property:

\medskip
\noindent
{\bf Assumption $\spadesuit$:} For every $z \in Z$ and
$\epsilon \in \cI$
$$B(z,\epsilon/(2\lambda)) \subset \text{Interior}(\supp f^{(\epsilon)}_z) \subset \supp f^{(\epsilon)}_z \subset B(z,\lambda\epsilon/2)\;.$$

\medskip
\noindent
Furthermore, we shall need the following condition which relates partitions of
unity  $\{f_z^{(\epsilon)}\}$ for different values of $\epsilon$:

\medskip
\noindent
{\bf Assumption $\clubsuit$:} For every $z \in Z$,
\begin{equation}\label{eq-monotonicity}
\supp f^{(\epsilon_1)}_z \subset \supp f^{(\epsilon_2)}_z \;\;\text{for}\;\;\epsilon_1,\epsilon_2 \in \cI, \;  \epsilon_1 < \epsilon_2\;.
\end{equation}

\medskip
\noindent
Finally, pick $a,b \in \cI^0$ (where $\cI^0$ stands for the interior of $\cI$)
with
\begin{equation}
\label{eq-ineq-ab}
b/a > 4\lambda^2\;.
\end{equation}
Such a choice is possible due to \eqref{eq-rrprim}. Under certain circumstances, which
will be clarified below, we shall additionally postulate the following.

\medskip
\noindent
{\bf Assumption $\diamondsuit$:} For every $s \in \{a,b\}$ and $z,w \in Z$,
if the interiors of the supports of $f^{(s)}_z$ and $f^{(s)}_w$ do not intersect,
then the supports do not intersect.

\medskip
\noindent
For instance, if the boundaries of the support of $f^{(s)}_z$ are generic smooth $s$-dependent families of embedded hypersurfaces, assumption ($\diamondsuit$) holds true for all but finite number of values
of the parameter $s$. Thus this assumption can be achieved by a small perturbation of any pair
$a,b$ satisfying \eqref{eq-ineq-ab}.

\medskip
\noindent
\begin{exam}{\rm Assume, for instance, that for every $z \in Z$ and $s \in \cI$, the support
of $f^{(s)}_z$ equals the closure of $B(z,s)$. Then ($\diamondsuit$) fails only for $s$
taking value in a finite set
$$s \in \{d(z,w)/2\;:\; z,w \in Z, z\neq w\} \cap \cI\;.$$ }
\end{exam}

\medskip
\noindent
{\bf Convention $\heartsuit$:} We fix $m>0$, and $a$ and $b$ from $\cI^0$ satisfying \eqref{eq-ineq-ab}.
If $m \geq 1$, we assume in addition that $a,b$ satisfy ($\diamondsuit$) \footnote{We thank Laurent Charles
for noticing that assumption  ($\diamondsuit$) is not needed when $m < 1$.}

\medskip

As in the previous section, for every $\epsilon \in \cI$ consider the quantum repeated registration procedure with respect to the sensor network $Z$ and the partition of unity $\{f^{(\epsilon)}_z\}$, and write $p^Q_{zw,\hbar,\epsilon}$
for the probabilities of  the outcome $z,w$. Here we bookkeep the dependence on $\epsilon$ in our notation. Form the complex $Q_{m,\hbar,\epsilon}$ as follows. Its vertices are the points of $Z$, and the subset $\sigma \subset Z$ forms a simplex of $Q_{m,\hbar,\epsilon}$ whenever  $p^{Q}_{zw,\hbar,\epsilon} \geq \hbar^m$ for all $z,w \in \sigma$ with $z \neq w$.

\begin{prop}
\label{prop-monot} $Q_{m,\hbar,a} \subset Q_{m,\hbar,b}$ for all sufficiently small $\hbar$.
\end{prop}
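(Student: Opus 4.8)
The plan is to exploit the fact that each $Q_{m,\hbar,\epsilon}$ is a clique (flag) complex on the common vertex set $Z$: a subset $\sigma\subset Z$ is a simplex precisely when every one of its edges is present. Consequently the inclusion $Q_{m,\hbar,a}\subset Q_{m,\hbar,b}$ reduces to an edge comparison, namely to showing that for each pair of distinct sensors $z,w\in Z$ the implication
\[
p^{Q}_{zw,\hbar,a}\ge \hbar^{m}\ \Longrightarrow\ p^{Q}_{zw,\hbar,b}\ge \hbar^{m}
\]
holds once $\hbar$ is small enough. Since $Z$ is finite there are only finitely many pairs, so it suffices to establish this for a single pair and afterwards take the worst of the finitely many thresholds on $\hbar$. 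Fix such a pair and write $\delta:=d(z,w)$; I would split the analysis according to the position of $\delta$ relative to $\lambda a$.

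In the \emph{far case} $\delta\ge \lambda a$, Assumption $(\spadesuit)$ places $\supp f^{(a)}_z$ inside the open ball $B(z,\lambda a/2)$ and $\supp f^{(a)}_w$ inside $B(w,\lambda a/2)$; these balls are disjoint when $\delta\ge \lambda a$, hence so are the two supports. By the last assertion of Proposition \ref{probrepeatedq} we then have $p^{Q}_{zw,\hbar,a}=\bigo(\hbar^{\infty})$, which drops below $\hbar^{m}$ for small $\hbar$, so the $a$-edge is simply absent and the implication is vacuous. In the \emph{near case} $\delta<\lambda a$, inequality \eqref{eq-ineq-ab} gives $b/a>4\lambda^{2}>\lambda^{2}$, whence $\lambda a<b/\lambda$ and therefore $\delta<b/\lambda$. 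The midpoint of a minimizing geodesic joining $z$ and $w$ (which exists as $M$ is a closed, hence complete, Riemannian manifold) then lies in both inner balls $B(z,b/(2\lambda))$ and $B(w,b/(2\lambda))$, so their intersection $O$ is a nonempty open set; by $(\spadesuit)$ it sits inside the interiors of both $\supp f^{(b)}_z$ and $\supp f^{(b)}_w$.

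The remaining and, I expect, only genuinely delicate point is to convert this geometric overlap into a strict lower bound on the classical probability $p^{C}_{zw,b}=\int_M f^{(b)}_z f^{(b)}_w\,d\mu$. The subtlety is that a smooth function may vanish at interior points of its support, so one cannot assert $f^{(b)}_z>0$ throughout $O$. However $\{f^{(b)}_z\neq 0\}$ is by definition dense in $\supp f^{(b)}_z$, hence dense and open in $O$, and likewise for $w$; two dense open subsets of the nonempty open set $O$ must meet, yielding a nonempty open set on which both functions are strictly positive. Since $f^{(b)}_z,f^{(b)}_w\ge 0$ and the normalized symplectic volume $\mu$ is a smooth positive measure, this forces $p^{C}_{zw,b}=:c>0$. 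Applying Proposition \ref{probrepeatedq} once more gives $p^{Q}_{zw,\hbar,b}=c+\bigo(\hbar)\to c>0$, so $p^{Q}_{zw,\hbar,b}\ge \hbar^{m}$ for all small $\hbar$ and the $b$-edge is present.

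Running both cases over the finitely many pairs and choosing $\hbar$ below the minimum of the resulting thresholds establishes the edge inclusion, and with it $Q_{m,\hbar,a}\subset Q_{m,\hbar,b}$. The geometry is kept under control entirely by the gap $b/a>\lambda^{2}$ coming from \eqref{eq-ineq-ab}, while the only real balancing act is between the two competing scales: the superpolynomially small $\bigo(\hbar^{\infty})$ that kills far pairs versus the polynomial threshold $\hbar^{m}$. It is worth remarking that this distance dichotomy never invokes Assumption $(\diamondsuit)$: the near case is resolved purely by the \emph{inner} ball $B(z,b/(2\lambda))$ of the larger scale, irrespective of how the supports at scale $a$ happen to touch, so the monotonicity holds for every $m>0$.
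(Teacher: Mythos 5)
Your proof is correct, but it takes a genuinely different route from the paper's. The paper obtains the proposition in one line from ingredients it needs anyway: Proposition \ref{prop-cq-full} identifies $Q_{m,\hbar,s}$ with the classical complex $C_s$ for $s\in\{a,b\}$ and all sufficiently small $\hbar$ (this is exactly where convention $(\heartsuit)$, and hence assumption $(\diamondsuit)$ when $m\geq 1$, enters), and assumption $(\clubsuit)$ makes the supports nested in $\epsilon$, so that the classical complexes form a filtered family; hence $Q_{m,\hbar,a}=C_a\subset C_b=Q_{m,\hbar,b}$. You instead work directly on the quantum side, edge by edge (legitimate, since these are flag complexes on the common vertex set $Z$), via a metric dichotomy: if $d(z,w)\geq\lambda a$, then $(\spadesuit)$ forces the scale-$a$ supports to be disjoint and the $\bigo(\hbar^\infty)$ clause of Proposition \ref{probrepeatedq} kills the $a$-edge, while if $d(z,w)<\lambda a$, the gap \eqref{eq-ineq-ab} makes the inner balls $B(\cdot,b/(2\lambda))$ overlap, and your density argument (which is indeed needed, since a smooth function can vanish at interior points of its support) yields $p^C_{zw,b}>0$ and hence a $b$-edge by \eqref{eq-Q2}. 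The trade-off: your argument uses neither $(\clubsuit)$ nor $(\diamondsuit)$, so the inclusion holds for every $m>0$ and, as you observe, already under the weaker gap $b/a>\lambda^2$; the paper's argument, besides being a one-liner given Proposition \ref{prop-cq-full}, yields the stronger identification $Q_{m,\hbar,s}=C_s$, which is indispensable later in the proof of Theorem \ref{thm-main} via \eqref{eq-pers-qc}, whereas your proof gives only the inclusion --- which, to be fair, is all that Proposition \ref{prop-monot} asserts.
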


\medskip
\noindent
For a sufficiently small $\hbar$ define a vector space $X_{m,\hbar,a,b}$
as the image of the natural morphism  $$H(Q_{m,\hbar,a}) \to H(Q_{m,\hbar,b})$$
in homology. Here and below we use homology with coefficients in a field.
Now we are ready to formulate our main result under assumptions ($\spadesuit$), ($\clubsuit$) and convention ($\heartsuit$).

\begin{thm}
\label{thm-main}  The space $X_{m,\hbar,a,b}$ is isomorphic to the homology
of the phase space $H(M)$ for every sufficiently small $\hbar>0$.
\end{thm}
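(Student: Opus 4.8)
The plan is to show that for all sufficiently small $\hbar$ the complexes $Q_{m,\hbar,a}$ and $Q_{m,\hbar,b}$ stabilize to purely combinatorial flag complexes, to sandwich these between \v{C}ech nerves of convex ball covers of $M$ whose homology is $H(M)$, and then to read off $X_{m,\hbar,a,b}$ by an elementary diagram chase.

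First I would fix the combinatorics. By Proposition \ref{probrepeatedq}, $p^{Q}_{zw,\hbar,\epsilon}=p^{C}_{zw,\epsilon}+\bigo(\hbar)$ with $p^{C}_{zw,\epsilon}=\int_M f^{(\epsilon)}_z f^{(\epsilon)}_w\,d\mu$, and $p^{Q}_{zw,\hbar,\epsilon}=\bigo(\hbar^\infty)$ whenever $\supp f^{(\epsilon)}_z\cap\supp f^{(\epsilon)}_w=\emptyset$. If the interiors of the two supports meet, then $f^{(\epsilon)}_z f^{(\epsilon)}_w>0$ on a nonempty open set, so $p^{C}_{zw,\epsilon}$ is a fixed positive constant and $p^{Q}_{zw,\hbar,\epsilon}\geq\hbar^m$ once $\hbar$ is small. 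If the interiors are disjoint, then either the supports are disjoint, giving $\bigo(\hbar^\infty)<\hbar^m$; or, when $m<1$, the touching supports give $p^{C}_{zw,\epsilon}=0$ and $p^{Q}_{zw,\hbar,\epsilon}=\bigo(\hbar)<\hbar^m$; while for $m\geq 1$ assumption ($\diamondsuit$) of convention ($\heartsuit$) excludes this case altogether. Since $Z$ is finite, one threshold on $\hbar$ handles all pairs and both $s\in\{a,b\}$. Hence $Q_{m,\hbar,s}$ coincides with the flag complex $K_s$ on the vertex set $Z$ whose edges are the pairs $\{z,w\}$ with $\operatorname{int}\supp f^{(s)}_z\cap\operatorname{int}\supp f^{(s)}_w\neq\emptyset$. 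Monotonicity ($\clubsuit$) gives $K_a\subseteq K_b$, which recovers Proposition \ref{prop-monot}, and $X_{m,\hbar,a,b}=\operatorname{im}\bigl(H(K_a)\to H(K_b)\bigr)$.

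Next I would build the geometric sandwich. Let $\cN_\rho$ denote the nerve of the cover $\{B(z,\rho)\}_{z\in Z}$. By ($\spadesuit$), $B(z,\epsilon/(2\lambda))\subseteq\operatorname{int}\supp f^{(\epsilon)}_z\subseteq B(z,\lambda\epsilon/2)$. A common point of the radius-$\epsilon/(2\lambda)$ balls of a simplex lies in all the corresponding support interiors, so $\cN_{\epsilon/(2\lambda)}\subseteq K_\epsilon$; and an edge of $K_\epsilon$ forces $d(z,w)<\lambda\epsilon$, whence any one of its vertices lies in all the radius-$\lambda\epsilon$ balls, giving $K_\epsilon\subseteq\cN_{\lambda\epsilon}$. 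Thus $\cN_{\epsilon/(2\lambda)}\subseteq K_\epsilon\subseteq\cN_{\lambda\epsilon}$. Since $b/a>4\lambda^2$ by \eqref{eq-ineq-ab} we have $\lambda a<b/(2\lambda)$, so $\cN_{\lambda a}\subseteq\cN_{b/(2\lambda)}\subseteq K_b$, and the inclusions assemble into the chain
$$\cN_{a/(2\lambda)}\subseteq K_a\subseteq\cN_{\lambda a}\subseteq K_b\subseteq\cN_{\lambda b}.$$
Because $a,b\in\cI^0$ and $r'/r>4\lambda^4$ by \eqref{eq-rrprim}, all four nerve radii lie in $(r,2r')$: being larger than $r/2$ the balls cover the $r/2$-net $Z$ and hence $M$, and being smaller than $4r'$ they are geodesically convex.

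The crux is the persistent (functorial) nerve lemma. Each $\cN_\rho$ above is the nerve of a cover by geodesically convex balls; a nonempty finite intersection of such balls sits inside a single convex normal neighbourhood and is therefore itself geodesically convex, hence contractible, so the cover is good and $\cN_\rho\simeq M$. What genuinely needs care is that the inclusions $\cN_\rho\hookrightarrow\cN_{\rho'}$ (for $\rho<\rho'$ in the window) induce homology isomorphisms \emph{compatible} with these identifications with $M$; this is precisely the functorial nerve lemma, which I would invoke through partition-of-unity nerve maps $\cN_\rho\to M$ that commute up to homotopy with the inclusions. Granting it, in
$$H(\cN_{a/(2\lambda)})\xrightarrow{\alpha}H(K_a)\xrightarrow{\beta}H(\cN_{\lambda a})\xrightarrow{\gamma}H(K_b)\xrightarrow{\delta}H(\cN_{\lambda b})$$
the composites $\beta\alpha$ and $\delta\gamma$ are the nerve inclusions and hence isomorphisms, each identified with the identity of $H(M)$. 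A one-line diagram chase then finishes: $\beta\alpha$ iso makes $\beta$ surjective, so $\operatorname{im}(\gamma\beta)=\operatorname{im}\gamma$; $\delta\gamma$ iso makes $\gamma$ injective, so $\dim\operatorname{im}\gamma=\dim H(\cN_{\lambda a})=\dim H(M)$ in every degree. Since $X_{m,\hbar,a,b}=\operatorname{im}(\gamma\beta)$ and we work over a field, equality of graded dimensions gives $X_{m,\hbar,a,b}\cong H(M)$. The main obstacle is therefore the functorial nerve lemma together with the convexity bookkeeping that makes the ball covers good; the combinatorial stabilization of the first step and the linear algebra of the last are routine.
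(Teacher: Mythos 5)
Your proposal is correct, but it takes a genuinely different route from the paper's. The paper also begins with your combinatorial stabilization step (its Proposition \ref{prop-cq-full}: your flag complex $K_s$ is literally the classical complex $C_s$, and $Q_{m,\hbar,s}=C_s$ for small $\hbar$), but from there it goes through persistence theory: it truncates the filtered classical complex to a persistence module $V$, proves $V$ is $\lambda$-interleaved with the Vietoris--Rips module $W$ of the net (Proposition \ref{prop-inter}), uses the standard Rips--\v{C}ech comparison to interleave $W$ with the constant module $Y$ with $Y_t=H(M)$ on $(r,4r')$, and then invokes the stability theorem to match barcodes and count the bars of $\cB(V)$ containing $[a,b]$, obtaining $P_{ab}(V)\cong H(M)$. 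You bypass the Rips complex, the interleaving formalism and, most notably, the stability theorem: your chain $\cN_{a/(2\lambda)}\subseteq K_a\subseteq \cN_{\lambda a}\subseteq K_b\subseteq \cN_{\lambda b}$, the functorial (persistent) nerve lemma, and the elementary image chase constitute the ``sandwich lemma'' familiar from persistence-based reconstruction (Chazal--Oudot), applied directly. This is a real simplification, and an interesting one: Section \ref{subsec-why} of the paper asserts that the stability theorem ``enters the play'' and that no shortcut is visible; your argument is such a shortcut, the point being that Theorem \ref{thm-main} only concerns the \emph{image} of $H(K_a)\to H(K_b)$, not the homology of any individual complex $K_t$ (which, as Section \ref{subsec-why} explains, genuinely cannot be pinned down for large $\lambda$). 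Nor is the functorial nerve lemma an extra cost relative to the paper: constructing $Y$ as a persistence module with identity structure maps and interleaving it with $W$ already presupposes exactly that compatibility, so both proofs rest on it. Two small points to tighten in a final write-up: the partition-of-unity maps in the nerve lemma go $M\to \cN_\rho$ rather than $\cN_\rho\to M$ (harmless, since commutativity up to homotopy with the inclusions is what you need); and the statement that every ball of radius $<4r'$ is geodesically convex requires the standard convexity-radius reading of the paper's hypothesis on $4r'$-balls, since convexity is not formally monotone in the radius --- but the paper's own argument needs the identical fact for the ball covers underlying $Y$, so your proof is no worse off on this score.
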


\medskip
\noindent
The proof of Proposition \ref{prop-monot} and Theorem \ref{thm-main}
occupies the next section.

\section{Persistent homology: proofs} \label{sec-php}
We shall proof Theorem \ref{thm-main} in three steps: first, we compare the complex
$Q_{m,\hbar,\epsilon}$ with its classical analogue, next we observe that the latter
is in some sense close to the Vietoris-Rips complex of the sensor network $Z$, and
finally we deduce the theorem by using methods of topological data analysis.

\subsection{The classical complex}\label{subsec-cc}
For every $\epsilon \in \cI$ consider the classical repeated registration procedure with respect to the sensor network $Z$ and the partition of unity $\{f^{(\epsilon)}_z\}$, and write $p^C_{zw,\epsilon}$ for the probabilities of  the outcome $z,w$.  Form the complex $C_\epsilon$  as follows. Its vertices are the points of $Z$, and the subset $\sigma \subset Z$ forms a simplex of $C_\epsilon$  whenever $p_{zw,\epsilon}^C > 0$ for all $z,w \in \sigma$ with $z \neq w$.

\begin{prop}\label{prop-cq-full}
For fixed $m,\epsilon$ and a sufficiently small
$\hbar$
\begin{equation}
\label{eq-Q-eq-C}
C_\epsilon = Q_{m,\hbar,\epsilon}\;
\end{equation}
provided $\epsilon$ satisfies ($\diamondsuit$).
\end{prop}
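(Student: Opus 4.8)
The plan is to use that both $C_\epsilon$ and $Q_{m,\hbar,\epsilon}$ are \emph{clique (flag) complexes}: a subset $\sigma\subset Z$ is declared a simplex precisely when every one of its edges $\{z,w\}$ satisfies the defining pairwise condition. Hence each complex is completely determined by its $1$-skeleton, and it suffices to show that $C_\epsilon$ and $Q_{m,\hbar,\epsilon}$ have the same edges once $\hbar$ is small. Since $Z$ is finite there are only finitely many pairs $\{z,w\}$, so I would exhibit for each pair a threshold $\hbar_0(z,w)$ below which its two edge-conditions agree, and then take the minimum over all pairs to obtain a single admissible $\hbar_0$.

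Fix a pair $z\neq w$. First I would record the classical dichotomy. Because $p^C_{zw,\epsilon}=\int_M f^{(\epsilon)}_z f^{(\epsilon)}_w\,d\mu$ has a nonnegative continuous integrand and $\mu$ charges every nonempty open set, $p^C_{zw,\epsilon}>0$ if and only if the open sets $\{f^{(\epsilon)}_z>0\}$ and $\{f^{(\epsilon)}_w>0\}$ meet. A short density argument (each set $\{f>0\}$ is dense in the interior of $\supp f$, and the intersection of two open dense subsets of a nonempty open set is again dense) shows this is in turn equivalent to the interiors of $\supp f^{(\epsilon)}_z$ and $\supp f^{(\epsilon)}_w$ meeting. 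Thus $\{z,w\}$ is an edge of $C_\epsilon$ exactly when these interiors intersect.

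Now I split into two cases. If the interiors intersect, then for the fixed value $\epsilon$ the number $p^C_{zw,\epsilon}$ is a positive constant; by Proposition \ref{probrepeatedq} we have $p^Q_{zw,\hbar,\epsilon}=p^C_{zw,\epsilon}+\bigo(\hbar)$, which tends to this positive constant, while $\hbar^m\to 0$ since $m>0$. Hence $p^Q_{zw,\hbar,\epsilon}\geq\hbar^m$ for small $\hbar$, so $\{z,w\}$ is an edge of $Q_{m,\hbar,\epsilon}$ as well. If the interiors do not intersect, then $p^C_{zw,\epsilon}=0$ and $\{z,w\}$ is a non-edge of $C_\epsilon$; here assumption ($\diamondsuit$) enters, upgrading disjointness of the interiors to disjointness of the supports themselves. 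This is exactly the hypothesis of the second part of Proposition \ref{probrepeatedq}, which yields $p^Q_{zw,\hbar,\epsilon}=\bigo(\hbar^\infty)$, decaying faster than the fixed power $\hbar^m$; hence $p^Q_{zw,\hbar,\epsilon}<\hbar^m$ for small $\hbar$ and $\{z,w\}$ is a non-edge of $Q_{m,\hbar,\epsilon}$ too. In both cases the edge-conditions coincide below some $\hbar_0(z,w)$.

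The step I expect to be the crux is the non-intersecting case, and this is precisely where ($\diamondsuit$) is indispensable. The super-polynomial decay $\bigo(\hbar^\infty)$ in Proposition \ref{probrepeatedq} requires genuinely disjoint closed supports, whereas the classical vanishing $p^C_{zw,\epsilon}=0$ already holds once the open interiors are disjoint. These two conditions differ exactly when the supports touch along their boundaries, and ($\diamondsuit$) is the hypothesis that excludes this borderline configuration. Without it, a pair with touching-but-interior-disjoint supports would be a classical non-edge yet might retain a merely polynomially small, rather than super-polynomially small, quantum probability, which could create a spurious quantum edge and break the equality $C_\epsilon=Q_{m,\hbar,\epsilon}$.
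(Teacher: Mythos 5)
Your proof is correct and runs on the same mechanism as the paper's: the forward inclusion $C_\epsilon \subset Q_{m,\hbar,\epsilon}$ follows from the estimate $p^Q_{zw,\hbar,\epsilon} = p^C_{zw,\epsilon} + \bigo(\hbar)$ of Proposition \ref{probrepeatedq} together with positivity of the minimal nonzero classical probability (finiteness of $Z$), and the reverse inclusion from ($\diamondsuit$) combined with the $\bigo(\hbar^\infty)$ decay for disjoint supports. Your organizational choices --- the per-edge (flag complex) reduction, per-pair thresholds minimized over the finitely many pairs, and the explicit density argument showing that $\{z,w\}$ is an edge of $C_\epsilon$ exactly when the interiors of the supports meet --- only make explicit what the paper leaves implicit. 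The one substantive divergence is in the reverse inclusion: the paper splits into cases, arguing as you do via ($\diamondsuit$) and the $\bigo(\hbar^\infty)$ statement when $m \geq 1$, but dispensing with ($\diamondsuit$) entirely when $m < 1$, since there $p^C_{zw,\epsilon} = 0$ already forces $p^Q_{zw,\hbar,\epsilon} = \bigo(\hbar) \leq \hbar^m$ for small $\hbar$. This is precisely why Convention ($\heartsuit$) imposes ($\diamondsuit$) only for $m \geq 1$ (see the paper's footnote crediting Charles). Consequently, your closing claim that ($\diamondsuit$) is ``indispensable'' in the non-intersecting case is too strong: touching-but-interior-disjoint supports can create a spurious quantum edge only when $m \geq 1$; for $m < 1$ the crude $\bigo(\hbar)$ bound already beats the threshold $\hbar^m$. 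Since the proposition as stated assumes ($\diamondsuit$), your uniform use of it is legitimate and your proof is valid; it simply does not expose the redundancy of that hypothesis for small $m$, which the paper's case split does.
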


\begin{proof}
For $\epsilon \in \cI$ put
\begin{equation}\label{eq-gamma}
\gamma_\epsilon= \min \{ p_{zw,\epsilon}^C \;:\; z,w \in Z, z \neq w, p_{zw,\epsilon}^C>0\}\;.
\end{equation}
Note that $\gamma_\epsilon >0$. Applying formula \eqref{eq-Q2}  in Proposition \ref{probrepeatedq}  we see that
$$p_{zw,\epsilon}^C>0 \Rightarrow  p_{zw,\epsilon}^C \geq \gamma_\epsilon \Rightarrow
 p_{zw,\hbar,\epsilon}^Q \geq \gamma_\epsilon  + \bigo(\hbar) \geq \hbar^m,$$
 provided $\hbar$ is less than a constant depending on $m$ and $\epsilon$. It follows that
$C_\epsilon \subset Q_{m,\hbar,\epsilon}$.

The proof of the opposite inclusion splits into two cases.
If $m \geq 1$, by assumption ($\diamondsuit$) and the second statement of Proposition \ref{probrepeatedq},
$$p_{zw,\hbar,\epsilon}^Q \geq  \hbar^m \Rightarrow \supp f_z^{(\epsilon)} \cap \supp f_w^{(\epsilon)} \neq \emptyset$$ $$ \Rightarrow \text{Interior}(\supp f_z^{(\epsilon)}) \cap \text{Interior}(\supp f_w^{(\epsilon)}) \neq \emptyset \Rightarrow p_{zw,\epsilon}^C = \int_M f_z^{(\epsilon)}f_w^{(\epsilon)})d\mu >0\;.$$
If $m< 1$, formula \eqref{eq-Q2}  in Proposition \ref{probrepeatedq} implies that
$$p_{zw,\epsilon}^C = 0 \Rightarrow p_{zw,\hbar,\epsilon}^Q = \bigo(\hbar) \leq \hbar^m\;.$$
In both cases we get that $C_\epsilon \supset Q_{m,\hbar,\epsilon}$, which completes the proof.
\end{proof}

\medskip
\noindent
In particular, for all sufficiently small $\hbar$
\begin{equation}\label{eq-abm}
Q_{m,\hbar,a}=C_a \;\;\text{and}\;\; Q_{m,\hbar,b}=C_b\;,
\end{equation}
where $a$ and $b$ satisfy $(\heartsuit$).

\subsection{Persistence modules and their barcodes}\label{subsec-pm}
Next, we focus on the family of complexes $C_t$, $t \in \cI$.
Note that assumption ($\clubsuit$) guarantees that $C_t \subset C_s$ for $s < t$ and hence
$C_s$ is {\it a filtered complex}. Combining this with \eqref{eq-abm}, we deduce Proposition
\ref{prop-monot}.

We shall study the filtered complex $\{C_s\}$ by using techniques of persistence modules,
see \cite{E,pers-book,Ou} for an introduction.\footnote{{\bf Warning:} Here and below we work with persistence modules parameterized by positive real numbers $\R_+$. The group
$\R_+$ acts by multiplication on the set of parameters. The notion of interleaving and the stability theorem are adjusted accordingly.}
Let us recall some preliminaries. A {\it point-wise finite dimensional persistence module} is a family of finite-dimensional vector spaces $V_a$, $a \in \R_+$ over a field $\cF$ together with {\it persistence morphisms} $\pi^V_{ab}: V_a \to V_b$, $a <b$, such that $\pi^V_{bc} \circ \pi^V_{ab}=\pi^V_{ac}$ provided $a<b<c$. An important example is given by an {\it interval module} $\cF(I)$, where $I$ is an interval (that is, a connected subset) of $\R_+$ which is defined as $\cF(I)_a= \cF$ for $a \in I$ and $0$ otherwise, and
$\pi^{\cF(I)}_{ab}= \id$ for $a,b \in I$ and $0$ otherwise. According to the {\it normal form theorem}, for every persistence module $V$ there exists unique collection of intervals with multiplicities, which is called
{\it a barcode} and denoted by $\cB(V)$, so that $V = \oplus_{I \in \cB(V)} \cF(I)$.

The following notion (see \cite{pers-book}) is crucial for our purposes.  Two persistence modules $V$ and $W$ are called {\it $K$-interleaved}
($K>1$) if there exist
linear maps $\phi_a: V_a \to W_{Ka}$ and $\psi_a: W_a \to V_{Ka}$
commuting with the respective persistence morphisms so that
$$\phi_{a}\circ \psi_{a/K} = \pi^W_{a/K,Ka},\;\;\psi_{a}\circ \phi_{a/K} = \pi^V_{a/K,Ka}\;\;\forall a \in \R_+\;.$$
According to a deep {\it stability theorem}, one can erase certain (not necessarily all)
bars in $\cB(V)$ and $\cB(W)$ with the ratio of the endpoints $\leq K^2$ such that the rest of the bars are matched in one to one manner as follows: if a bar $I \in \cB(V)$ with endpoints $a < b$  corresponds to a bar $J \in \cB(W)$ with endpoints $a' < b'$ then $$K^{-1} \leq a/a' \leq K,\; K^{-1} \leq b/b' \leq K\;.$$

For a persistence module $V$, define the subspace $$P_{st}(V):= \text{Image}(\pi_{st}^V)\;,\;s <t\;.$$

Suppose that we are given a family of finite simplicial complexes $D_s$, with $s$ running over an interval $\cJ \subset \R_+$ and $D_s \subset D_t$ for $s < t$. Consider a persistence module $V$ which equals $H(D_s)$ for $s \in \cJ$ and vanishes outside $\cJ$ (for the idea of such a truncation, see e.g.
\cite{pers-book}). By definition, the persistence morphisms $\pi^V_{st}$ are the inclusion morphisms in homology for $s,t \in \cJ$ and they vanish otherwise. For $s,t \in \cJ$, $s < t$ the space $P_{st}(V)$ is called {\it the persistent homology} of $D$.  We shall  denote
$V= \overline{H}(D)_{\cJ}$.

Apply the above construction to the family of classical complexes $C_s$, $s \in \cI$.
In light of Proposition \ref{prop-cq-full},
\begin{equation}\label{eq-pers-qc}
P_{ab}(V)= X_{m,\hbar,a,b}\;,
\end{equation} where the latter space is
taken from Theorem \ref{thm-main} and $V:= \overline{H}(C)_\cI$.

\subsection{Interleaving with the Vietoris-Rips complex}\label{subsec-vrc}

Recall \cite{E,Ou} that the {\it Vietoris-Rips complex} $R_t$, $t >0$ of the subset $Z \subset (M,d)$
playing the role of our sensor network is defined as the complex whose vertices are the points of $Z$, and a subset $\sigma \subset Z$ forms a simplex of $R_t$ whenever the diameter of $\sigma$ is $< t$.
Apply the truncation construction to $R_t$, $t \in \cJ:= [2r,2r']$ and get the persistence module
$W:= \overline{H}(R)_\cJ$.  We keep notation $V$ for $\overline{H}(C)_\cI$.

\medskip
\noindent
\begin{prop}\label{prop-inter}
The persistence modules $V$ and $W$ are $\lambda$-interleaved.
\end{prop}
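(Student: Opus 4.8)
The plan is to deduce the interleaving from inclusions at the level of the filtered simplicial complexes themselves, and only then pass to homology by functoriality. Concretely, I would establish two one-sided geometric comparisons, namely that the classical complex at scale $s$ sits inside the Vietoris--Rips complex at scale $\lambda s$, and conversely that the Vietoris--Rips complex at scale $s$ sits inside the classical complex at scale $\lambda s$. Applying the homology functor to these inclusions, together with the truncation conventions defining $V=\overline{H}(C)_{\cI}$ and $W=\overline{H}(R)_{\cJ}$, produces the candidate interleaving maps $\phi$ and $\psi$; the interleaving identities will then follow because a composite of simplicial inclusions is again the corresponding inclusion, so the maps induced on homology compose to the persistence morphisms. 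Throughout I use the characterization, established in the proof of Proposition \ref{prop-cq-full}, that $\sigma$ is a simplex of $C_s$ exactly when the interiors of $\supp f^{(s)}_z$ and $\supp f^{(s)}_w$ meet for every pair of distinct vertices $z,w\in\sigma$.

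The inclusion $C_s\subset R_{\lambda s}$ is immediate from the outer half of Assumption ($\spadesuit$): if $\sigma$ is a simplex of $C_s$, then for distinct $z,w\in\sigma$ the interiors of the two supports meet, and since both supports lie in balls of radius $\lambda s/2$ about their centres, the triangle inequality forces $d(z,w)<\lambda s$; hence $\diam\sigma<\lambda s$ and $\sigma\in R_{\lambda s}$. The reverse inclusion $R_s\subset C_{\lambda s}$ uses the inner half of ($\spadesuit$) together with the convex normal neighbourhood hypothesis. If $\diam\sigma<s$, then for distinct $z,w\in\sigma$ we have $d(z,w)<s\le 2r'$, so $z$ and $w$ lie in a common convex normal ball of radius $4r'$ and are joined by a unique minimal geodesic whose midpoint lies in $B(z,s/2)\cap B(w,s/2)$. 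Since ($\spadesuit$) at parameter $\lambda s$ gives $B(z,s/2)=B(z,(\lambda s)/(2\lambda))\subset\text{Interior}(\supp f^{(\lambda s)}_z)$, and likewise for $w$, these interiors meet, whence $\sigma$ is a simplex of $C_{\lambda s}$. The point of the calibration in ($\spadesuit$) is that both comparisons cost exactly the factor $\lambda$, which is precisely what yields a $\lambda$-interleaving rather than a weaker one.

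With these inclusions I would define $\phi_a\colon V_a\to W_{\lambda a}$ from $C_a\subset R_{\lambda a}$ and $\psi_a\colon W_a\to V_{\lambda a}$ from $R_a\subset C_{\lambda a}$ wherever the relevant spaces are nonzero, and set the maps to zero otherwise. A short computation with $\cI=[2r\lambda,2r'/\lambda]$ and $\cJ=[2r,2r']$ shows that $a\in\cI$ implies $\lambda a\in\cJ$, so $\phi$ is defined on all of $V$, and that $\psi$ is defined wherever its target is nonzero. The identities $\psi_a\circ\phi_{a/\lambda}=\pi^V_{a/\lambda,\lambda a}$ and $\phi_a\circ\psi_{a/\lambda}=\pi^W_{a/\lambda,\lambda a}$ then hold by functoriality on the range where all four complexes are defined, and hold trivially elsewhere: indeed $\pi^V_{a/\lambda,\lambda a}$ can be nonzero only when both $a/\lambda$ and $\lambda a$ lie in $\cI$, i.e. $a\in[2r\lambda^2,2r'/\lambda^2]$, and on exactly that range both $\phi_{a/\lambda}$ and $\psi_a$ are the nonzero maps coming from inclusions, so the composite equals $\pi^V$ by simplicial composition $C_{a/\lambda}\subset R_a\subset C_{\lambda a}$ (nonemptiness of this range is guaranteed by $r'/r>\lambda^4$, a consequence of \eqref{eq-rrprim}); off that range the source or target of $\pi^V$ vanishes and one of the two composed maps is zero. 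The same dichotomy, run with $R_{a/\lambda}\subset C_a\subset R_{\lambda a}$, handles the $W$-identity.

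The main obstacle I anticipate is not the geometry, which is routine once ($\spadesuit$) is unwound, but the bookkeeping forced by the truncation: one must verify that every interleaving identity not covered by a genuine simplicial composition is nonetheless satisfied because both sides vanish, and this requires tracking precisely which of $a/\lambda$, $a$, $\lambda a$ land in $\cI$ versus $\cJ$ in each of the four relations. A secondary point requiring care is the existence of the geodesic midpoint in the reverse inclusion: one must confirm that the diameter bound keeps every vertex pair inside a single convex normal neighbourhood, which is exactly what the radius $r'$ and the range $\cJ=[2r,2r']$ were chosen to ensure.
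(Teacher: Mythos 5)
Your proof is correct and takes essentially the same route as the paper's: both derive the two inclusions $R_s \subset C_{\lambda s}$ (via the geodesic midpoint and the inner half of ($\spadesuit$)) and $C_t \subset R_{\lambda t}$ (via the outer half), then define the interleaving maps as inclusion-induced morphisms on the appropriate parameter ranges and zero elsewhere. The only difference is that you write out the truncation bookkeeping which the paper compresses into ``a direct inspection shows''; this is a harmless (indeed welcome) elaboration, not a different argument.
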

\begin{proof} We employ assumption ($\spadesuit$).
For $s \in [2r,2r'/\lambda^2]$ and $z,w \in Z$
$$d(z,w) < s \Rightarrow  B(z,s/2) \cap B(w,s/2) \neq \emptyset \Rightarrow p^C_{zw,s\lambda} > 0\;,$$
so $R_s \subset C_{\lambda s}$. Furthermore,
for $t \in [2r\lambda, 2r'/\lambda]$
$$p^C_{zw,t} > 0 \Rightarrow B(z,t\lambda/2) \cap B(w,t\lambda/2) \neq \emptyset \Rightarrow d(z,w) < t\lambda\;,$$ so $C_t \subset R_{\lambda t}$.

Define the morphisms
$\phi_s: W_s \to V_{\lambda s}$ and
$\psi_t: V_t \to W_{\lambda t}$ as the inclusion morphisms in homology when
$s \in [2r,2r'/\lambda^2]$ and, respectively, $t \in [2r\lambda, 2r'/\lambda]$, and as $0$ otherwise.
A direct inspection shows that they provide a desired $\lambda$-interleaving.
\end{proof}

\medskip
\noindent
{\bf Proof of Theorem \ref{thm-main}:} We keep using notations $V,W$ as above.
A standard  comparison between the Vietoris-Rips and the \v{C}ech complexes \cite{Ou} shows that the persistence module $W$ is $2$-interleaved with a module $Y$ such that $Y_t=H(M)$ for $t \in (r, 4r')$ and $Y_t = 0$ otherwise. It follows from Proposition \ref{prop-inter} that
$V$ and $Y$ are $2\lambda$-interleaved. Apply now the stability theorem. To any bar
of $\cB(V)$ containing $[a,b]$, where $a,b \in (2r\lambda, 2r'/\lambda)$ and $b/a > 4\lambda^2$,
corresponds unique bar of $\cB(Y)$. On the other hand, $\cB(Y)$ has exactly $\dim H(M)$ bars
of the form $(r,4r')$, and therefore to each such bar corresponds a bar of $\cB(V)$ containing $(2r\lambda, 2r'/\lambda)$, and hence containing $[a,b]$. It follows that $\cB(V)$ possesses
exactly $\dim H(M)$ bars containing $[a,b]$ which yields $P_{ab}(V)=H(M)$. Applying
\eqref{eq-pers-qc}, we finish off the proof of the theorem.
\qed

\subsection{Discussion: do we really need barcodes?} \label{subsec-why}  Denote by $\Gamma \subset \R_+$ the finite set of all distances $d(z,w)$, where $z,w \in Z$, $z \neq w$.
Take any  connected component of $\R_+ \setminus \Gamma$ which is bounded away from $0$ and $\infty$,
and consider its intersection, say $\Delta$ with $\cJ= [2r,2r']$. Observe that the Vietoris-Rips complex $R_t$ does not change
when $t$ runs within $\Delta$.  Take such a $t$, and assume that the parameter $\lambda >1$ is close to $1$ so that
\begin{equation}
\label{eq-tlam}
(t/\lambda,\lambda t) \subset \Delta\;.
\end{equation}
Recall that $\lambda$ appears in assumption ($\spadesuit$) on supports of functions $f^{(t)}_z$, $z \in Z$ from the partitions of unity; these functions play the role of probabilities in our registration model.
Roughly speaking, $\lambda$ measures deviation of the support from the metric ball $B(z,t)$.  The proof of Proposition \ref{prop-inter} together with  inclusion \eqref{eq-tlam}  yield
\begin{equation}
\label{eq-sandw}
R_{t/\lambda} \subset C_t \subset R_{t\lambda}\;,
\end{equation}
and hence, since
$R_{t/\lambda} = R_t=  R_{t\lambda}$, we conclude that $C_t = R_t$.

Sometimes, for an individual value of the parameter,
 the Rips complex associated to a finite subset carries important topological information about the ambient manifold.
For instance, by a theorem of Latschev \cite{Latschev}, for every sufficiently small $t > 0$ there exists $r >0$ such that $R_t$ is homotopically equivalent to $M$ for every $r$-net $Z$ in $M$. Take such $t$, and
choose $r>0$ small enough so that $t \in \cJ$ and Latschev's theorem holds. Choosing $Z$ to be a generic $r/2$-net, we can achieve that $t \notin \Gamma$. Next, pick $\lambda >1$ so close to $1$ that
$C_t=R_t$. Finally, assume that ($\diamondsuit$) holds for $t$, which, for instance, holds when
the boundaries of the supports of the functions $f^{(t)}_z$ are smooth and intersect transversally.
Then \eqref{eq-pers-qc} guarantees that for a given $m$ and a sufficiently small $\hbar$ we have
$Q_{m,\hbar,t}=C_t$. This yields that, for specially chosen sensor nets $Z$ and for $\lambda$ sufficiently close to $1$, the quantum complex $Q_{m,\hbar,t}$ has the homotopy type of $M$. Let us mention that
in this case we got a stronger conclusion without applying the stability theorem for persistence modules
which plays a crucial role in our proof of Theorem \ref{thm-main}.

However, for the values of $\lambda$ which are large in comparison with the ratios of the consecutive
points in $\Gamma$, the above argument fails since
``sandwiching" \eqref{eq-sandw} does not allow one to detect topology of the classical complex $C$ in terms of the Vietoris-Rips complex $R$. At this point the stability theorem enters the play, and we do not see a shortcut.

\section{Registration: proofs}\label{sec-reg-pr}

\medskip
\noindent{\bf Derivation of formula \eqref{eq-1}:}
Preparation of a system in the classical state $\nu$ means choosing an $M$-valued random variable $Z$ which is uniformly distributed with respect to $\nu$, i.e., the probability of finding $Z$ in a subset $U \subset M$ equals $\nu(U)$. Let $R$ be another random variable (registration) taking values in $\{1,...,N\}$
and defined on the same space of elementary events as $Z$. By definition,  $f_i(z)$
is the conditional probability $\PP(R=i|Z=z)$, which in turn means (again, by definition;
remember that $Z$ is in general a continuous random variable) that for every $U \subset M$ the joint probability $\PP(R=i,Z \in U)$ is given by
$$\PP(R=i,Z \in U) = \int_U \PP(R=i|Z=z)d\nu(z)\;.$$
Substituting $U=M$  and having in mind that $p^C_i:= \PP(R=i)$ we get \eqref{eq-1}.
\qed

\medskip
\noindent{\bf Derivation of formula \eqref{eq-C2}:} We assume that the first and the second registrations,
denoted by $R_1$ and $R_2$ respectively, are  random variables defined on the same space of elementary events as $Z$, which are conditionally independent given $Z$:
$\PP(R_1=i,R_2=j|Z=z)= f_i(z)f_j(z)$. It follows that
$$p^{C}_{ij}= \PP(R_1=i,R_2=j) = \int_M f_if_j d\mu\;,$$
which proves \eqref{eq-C2}.
\qed


\medskip
\noindent
{\bf Proof of Proposition \ref{probrepeatedq}:}
We shall use the following properties of the Berezin-Toeplitz quantization, see \cite{BMS,oim_symp}:

\begin{itemize}
\item {\bf (normalization)} $T_\hbar(1)=\id$;
\item {\bf (quasi-multiplicativity)} $\|T_\hbar(fg) - T_\hbar(f)T_\hbar(g)\|_{op} =\bigo(\hbar)$;
\item {\bf (trace correspondence)}  $$ \left| \text{trace}(T_\hbar(f)) - \frac{\text{Volume}(M)}{(2\pi\hbar)^{n}}\int_M f d\mu \right| \leq \text{const}\cdot ||f||_{L_1} \hbar^{-(n-1)}\;,$$
\end{itemize}
for all $f,g \in C^\infty(M)$.

Here $||\cdot||_{op}$ denotes the operator norm. In the quasi-multiplicativity, $\bigo(\hbar)$ stands for a remainder which depends on $\hbar$, $f$ and $g$ and whose  operator norm is $\leq \text{const}\cdot\hbar$ as $\hbar \to 0$.

Substituting $f=1$ into the trace correspondence, we get that $$\dim \Hilb_\hbar= \frac{\text{Volume}(M)}{(2\pi\hbar)^{n}}(1+\bigo(\hbar))\;,$$
and hence
\begin{equation}\label{eq-vsp-dim}
\frac{\tr(T_\hbar(f))}{\dim \Hilb_\hbar} = \int_M f d\mu \cdot (1+\bigo(\hbar))\;.
\end{equation}
This, together with \eqref{eq-1-Q} proves that
$$p^Q_{i,\hbar}= \frac{\tr(F_{i,\hbar})}{\dim \Hilb_\hbar} =  p^C_i(1+\bigo(\hbar))\;.$$

Let us focus now on the repeated registration. 
According to the L\"{u}ders rules of the state reduction \cite{Busch}, a.k.a. wave function collapse, after the first registration the system will be in the state $$\eta :=\frac{ F_{i,\hbar}^{1/2}\rho_0 F_{i,\hbar}^{1/2}}{\tr (F_{i,\hbar}\rho_0)}= \frac{ F_{i,\hbar}}{\tr (F_{i,\hbar})}\;,$$
provided the result of the  registration equals $i$. Thus the probability of the outcome $ij$ equals
\begin{equation}\label{eq-Q-2-vsp}  p^{Q}_{ij,\hbar}=
\tr (F_{i,\hbar}\rho_0) \cdot \tr (F_{j,\hbar}\eta) = \frac{\tr (F_{i,\hbar}F_{j,\hbar})}{\dim \Hilb_\hbar}\;.
\end{equation}
By the quasi-multiplicativity,
$$\tr (F_{i,\hbar}F_{j,\hbar})= \tr(T_\hbar(f_if_j)+ \bigo(\hbar))\;.$$
Thus by \eqref{eq-vsp-dim}
$$
p^{Q}_{ij,\hbar}= \int_M f_if_j d\mu+ \bigo(\hbar) = p^C_{ij}+\bigo(\hbar)\;,$$
which proves \eqref{eq-Q2}.

The last statement of the proposition follows from the fact that
$T_\hbar(fg)= \bigo(\hbar^\infty)$ provided $f$ and $g$ have disjoint supports.
This is an immediate consequence of the exponential  decay of the scalar product between distinct
coherent states as $\hbar \to 0$, see \cite{BMS,CP}.
\qed

\section{Inferring the nerve of a cover}\label{sec-nerve}

Let $(M,\mu)$ be a probability space endowed with a finite cover $\cU=\{U_1,\dots,U_N\}$
by measurable subsets. The measure $\mu$ represents the initial distribution of a particle in $M$.
In this section we deal with a version of the registration procedure involving a special partition
of unity of the following form. Denote by $\chi_i$ the indicator function of $U_i$, and put
$\chi:= \sum_k \chi_k$, $f_i := \chi_i/\chi$. The functions $f_i$ vanish outside $U_i$ and form a partition of
unity. For a vector $I:= (i_1,\dots,i_k)$ with $i_j \in \Omega_N :=  \{1,\dots,N\}$ put
$U_I := U_{i_1} \cap \dots \cap U_{i_k}$. The probability
of $k$-times repeated registration in sets $U_{i_1},\dots,U_{i_k}$ equals by definition
\begin{equation}\label{nerve-1}
p^C_I := \int_M f_{i_1}\dots f_{i_k} d\mu = \int_ {U_I} \chi^{-k}(z) d\mu(z)\;.
\end{equation}

\medskip
\noindent
\begin{exam}\label{exam-hypergraph}{\rm Consider a toy case when $M$ is a finite set. A cover
$\cU$ can be considered as a hypergraph  with the set of vertices $M$ and with  edges $U_i$, $i=1,...,N$. Introduce a probability measure $\mu$ on $M$ as follows:
$$\mu(z) = \chi(z)/S, \;\;\text{where}\;\; S:= \sum_{z \in M} \chi(z)\;.$$
In other words the probability of a vertex is proportional to the number of edges containing it.
Consider two-times repeated registration and look at the conditional probability $\mathbb{P}(j|i)$,
where $i,j \in \Omega_N$, of getting $j$ as the second outcome provided the first one equals $i$. Note that
$\mathbb{P}(j|i)=p_{ij}^C/p_i^C$, where by \eqref{nerve-1},
$$p_i^C = \sharp(U_i)/S,\;\;p_{ij}^C = S^{-1}\cdot\sum_{z \in U_i \cap U_j} \chi^{-1}(z)\;,$$
where $\sharp$ stands for the cardinality.
It follows that
$$\mathbb{P}(j|i)= \frac{\sum_{z \in U_i \cap U_j} \chi^{-1}(z)}{\sharp(U_i)}\;,$$
which, interestingly enough, coincides with the transition probability of the random walk
on the edges of a hypergraph as defined in \cite{ALL}. {\bf Warning:} our model of $k$-times repeated
registration differs from this random walk at time $k$ when $k \geq 3$ since our experimentalist registers all the way {\it the same} particle $z \in M$. Rather, the random walk models an ``absent-minded experimentalist" which at each step book-keeps the outcome of the previous registration but loses track of the specific particle, and therefore chooses it at random within the subset corresponding to this outcome.}
\end{exam}

\medskip
Suppose now that $M$ is a closed manifold equipped with a quantizable symplectic form $\omega$.
Write $\mu$ for the (normalized) symplectic volume. Let $\cU$ be any finite cover of $M$ by open subsets
with smooth boundaries.

Denote by  $\Theta_k$, $k=1,\dots, N$, the set of vectors $I=(i_1,\dots,i_k) \subset \Omega_N^k$
with strictly increasing coordinates: $i_1 < \dots < i_k$. Such vectors are identified with subsets of $\Omega_N$. Recall that the nerve of the cover $\cU$ is a simplicial complex $L(\cU)$ with vertices $\Omega_N$ whose $k$-simplices are formed by vectors $I \in \Theta_k$ with $U_I \neq \emptyset$.  The latter condition holds if and only if $p^C_I > 0$.  It follows that the repeated classical registration procedure detects the nerve of the cover. In particular, if the cover is {\it good}, i.e., all the sets $\cU_I$ for all values of $k$ are either empty or contractible, the nerve lemma guarantees that $L(\cU)$ has the homotopy type of $M$.

Consider now $k$-times repeated quantum registration. Put $F_{i,\hbar} = T_\hbar(f_i)$, where $T_\hbar$
stands for the Berezin-Toeplitz quantization. A straightforward calculation involving the L\"{u}ders
rule and the state reduction (cf. Section \ref{sec-reg-pr}) shows that the probability $p^Q_I$ of $k$-times
repeated registration in $I:= (i_1,\dots,i_k)$ equals

\begin{equation}\label{nerve-2}
p^Q_{I,\hbar} := \frac{\tr(F_{i_k,\hbar}^{1/2} \dots F_{i_2,\hbar}^{1/2}F_{i_1,\hbar} F_{i_2,\hbar}^{1/2}\dots F_{i_k,\hbar}^{1/2})}{\dim \Hilb_\hbar}\;.
\end{equation}

\medskip
\noindent
\begin{thm}[L. Charles] \label{thm-app} $p^Q_{I,\hbar} = p^C_I + \bigo(\hbar^{1/8})$.
\end{thm}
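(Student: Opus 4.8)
The plan is to bypass the obstruction that $f_i=\chi_i/\chi$ is only piecewise constant by first regularizing the symbols, then running the standard smooth Berezin--Toeplitz calculus, and finally optimizing the regularization scale against $\hbar$. Throughout, write $\tau_\hbar(A):=\tr(A)/\dim\Hilb_\hbar$ for the normalized trace, and note that $0\le f_i\le 1$ forces $0\le F_{i,\hbar}\le\id$, so every factor in the product $A:=F_{i_k,\hbar}^{1/2}\cdots F_{i_2,\hbar}^{1/2}F_{i_1,\hbar}F_{i_2,\hbar}^{1/2}\cdots F_{i_k,\hbar}^{1/2}$ has operator norm $\le 1$ and the square roots are unambiguously defined. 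Since the cover has smooth boundaries, I would choose smooth $f_i^{(\delta)}$ with $0\le f_i^{(\delta)}\le 1$, agreeing with $f_i$ outside the $\delta$-boundary layer $\{\di(\cdot,\partial U_i)<\delta\}$, so that $\|f_i-f_i^{(\delta)}\|_{L^1}=\bigo(\delta)$ (the layer has $\mu$-volume $\bigo(\delta)$) and $\|f_i^{(\delta)}\|_{C^k}=\bigo(\delta^{-k})$. Let $A^{(\delta)}$ denote the analogous product built from $F_{i,\hbar}^{(\delta)}:=T_\hbar(f_i^{(\delta)})$. The main obstacle, to be isolated below, is the appearance of operator square roots together with the fact that the natural symbol square roots are only Hölder-$\tfrac12$ near the boundary.

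The first step is to control the regularization error $\tau_\hbar(A)-\tau_\hbar(A^{(\delta)})$ by a telescoping argument in which each factor is replaced one at a time. Two norm estimates feed this: the POVM positivity gives $\|T_\hbar(g)\|_{\mathrm{op}}\le\|g\|_\infty$, and the trace correspondence (applied to $g\ge 0$, whence $T_\hbar(g)\ge 0$) gives the normalized trace-norm bound $\dim\Hilb_\hbar^{-1}\,\|T_\hbar(g)\|_{1}\lesssim\|g\|_{L^1}$, where $\|\cdot\|_1$ is the Schatten trace norm. The single linear factor $F_{i_1,\hbar}$ is perturbed by $T_\hbar(f_{i_1}-f_{i_1}^{(\delta)})$, contributing $\bigo(\delta)$ since the surrounding operators are bounded by $1$ in operator norm. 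For the square-root factors I would invoke the Powers--Størmer inequality $\|P^{1/2}-Q^{1/2}\|_2^2\le\|P-Q\|_1$ for positive $P,Q$ (with $\|\cdot\|_2$ the Hilbert--Schmidt norm): this yields $\|F_{i_j,\hbar}^{1/2}-(F_{i_j,\hbar}^{(\delta)})^{1/2}\|_2\lesssim(\dim\Hilb_\hbar\cdot\delta)^{1/2}$, and pairing one such factor against the remaining product in the Cauchy--Schwarz bound $|\tr(XY)|\le\|X\|_2\|Y\|_2$, with $\|Y\|_2\le(\dim\Hilb_\hbar)^{1/2}$, gives a contribution of size $\bigo(\delta^{1/2})$. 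Thus $|\tau_\hbar(A)-\tau_\hbar(A^{(\delta)})|=\bigo(\delta^{1/2})$.

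The second step collapses the smooth product. Here I would prove the key lemma that $(F_{i,\hbar}^{(\delta)})^{1/2}=T_\hbar\!\big(\sqrt{f_i^{(\delta)}}\big)+E$ with $E$ small: since $\sqrt{f_i^{(\delta)}}\ge 0$, both operators are positive, and by quasi-multiplicativity $T_\hbar(\sqrt{f_i^{(\delta)}})^2=F_{i,\hbar}^{(\delta)}+\bigo_{\mathrm{op}}(\hbar\,\|\sqrt{f_i^{(\delta)}}\|_{C^1}^2)$; applying Powers--Størmer once more converts this operator-norm closeness of the squares into a Hilbert--Schmidt bound on $E$. The crux is that $\sqrt{f_i^{(\delta)}}$ is merely Hölder, with $\|\sqrt{f_i^{(\delta)}}\|_{C^1}\sim\delta^{-1}$ dictated by the boundary layer, so $E$ is controlled only through \emph{two} square-root losses (one from the Hölder symbol, one from Powers--Størmer), and its normalized trace-level contribution is $\bigo\big((\hbar/\delta^{2})^{1/4}\big)=\bigo(\hbar^{1/4}\delta^{-1/2})$. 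After replacing all square-root factors this way, iterated quasi-multiplicativity collapses $T_\hbar(\sqrt{f_{i_k}^{(\delta)}})\cdots T_\hbar(f_{i_1}^{(\delta)})\cdots T_\hbar(\sqrt{f_{i_k}^{(\delta)}})$ into $T_\hbar\!\big(f_{i_1}^{(\delta)}\cdots f_{i_k}^{(\delta)}\big)$ with operator-norm error $\bigo(\hbar\delta^{-2})$, and the trace correspondence gives $\tau_\hbar(T_\hbar(g))=\int_M g\,d\mu+\bigo(\hbar)$; finally $\int\prod f_i^{(\delta)}d\mu=\int\prod f_i\,d\mu+\bigo(\delta)=p^C_I+\bigo(\delta)$.

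Collecting the estimates, the total error is
\begin{equation*}
|p^Q_{I,\hbar}-p^C_I|\;\lesssim\;\delta^{1/2}\;+\;\hbar^{1/4}\delta^{-1/2}\;+\;\hbar\delta^{-2}\;+\;\delta\;+\;\hbar\,.
\end{equation*}
The competition is between the regularization term $\delta^{1/2}$ and the square-root term $\hbar^{1/4}\delta^{-1/2}$; balancing them by the choice $\delta=\hbar^{1/4}$ makes both equal to $\hbar^{1/8}$, while $\hbar\delta^{-2}=\hbar^{1/2}$ is subdominant, giving $p^Q_{I,\hbar}=p^C_I+\bigo(\hbar^{1/8})$. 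The main obstacle is precisely the handling of the operator square roots $(F_{i,\hbar})^{1/2}$ against the non-smooth symbols: the Hölder-$\tfrac12$ regularity of $\sqrt{f_i^{(\delta)}}$ at $\partial U_i$ together with the Powers--Størmer conversion forces the weak exponent, and the value $1/8$ is exactly the compounding of these two square-root losses in the $\delta$-versus-$\hbar$ optimization. Establishing the square-root lemma rigorously---i.e. the discontinuous-symbol functional calculus for $T_\hbar(\chi_U)$ and its square root, with sharp dependence on the boundary layer---is the technical heart that the appendix's calculus of Toeplitz operators with piecewise constant symbol is designed to supply.
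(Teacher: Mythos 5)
Your strategy---regularize the symbols at scale $\delta$, run the smooth Berezin--Toeplitz calculus, compare operator square roots by Powers--St{\o}rmer, and optimize $\delta$ against $\hbar$---is genuinely different from the paper's. The appendix never regularizes: it works with the characteristic functions themselves, introduces the notion of a \emph{good set} ($T_\hbar(\chi_A)^2 = T_\hbar(\chi_A)+\bigo_1(\hbar^{1/2})$ in normalized trace norm), proves that smooth domains are good via the Gaussian off-diagonal decay of the Bergman kernel (the $\hbar^{1/2}$ is the volume of an $\hbar^{1/2}$-collar of $\partial A$), and then propagates this through products ($\bigo_2(\hbar^{1/4})$) and square roots ($\bigo_4(\hbar^{1/8})$, via operator monotonicity of the square root, which is the same mechanism as your Powers--St{\o}rmer step) purely by Schatten-norm and H\"older-inequality bookkeeping. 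Your Step 1 is sound: the telescoping replacement, the bound $\|T_\hbar(g)\|_1 \lesssim \dim\Hilb_\hbar\,\|g\|_{L^1}$ from positivity plus the trace correspondence, and the Powers--St{\o}rmer conversion do give a cost of $\bigo(\delta^{1/2})$ per square-root factor.

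The gap is in Step 2, in two places. First, the quantitative quasi-multiplicativity you invoke, $\|T_\hbar(f)T_\hbar(g)-T_\hbar(fg)\|_{op} = \bigo\bigl(\hbar\,\|f\|_{C^1}\|g\|_{C^1}\bigr)$, applied to symbols whose derivatives blow up like $\delta^{-1}=\hbar^{-1/4}$, is not available in the literature and is false in the stated form: the $\bigo(\hbar)$ remainder of the Toeplitz product is not controlled by $C^1$ norms alone (for fixed $C^1$ symbols one only gets $\bigo(\hbar^{1/2})$, cf.\ \cite{BaMaMaPi}); controlling it requires higher derivatives, hence a uniform calculus for $\hbar$-dependent ``exotic'' symbol classes. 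That is exactly the hard analysis your proof defers, and it cannot be outsourced to the appendix, whose machinery contains no statement about smooth symbols with degenerating norms. Second, your treatment of $\sqrt{f_i^{(\delta)}}$ is internally inconsistent: you call it ``merely H\"older'' yet assign it $C^1$ norm $\sim\delta^{-1}$. For a generic mollification of $f_i$, the function $\sqrt{f_i^{(\delta)}}$ really is only H\"older-$\tfrac12$ at the edge of its support, and your quasi-multiplicativity step collapses; the repair is to mollify $\sqrt{f_i}$ (which is also piecewise constant) with a profile flat at $0$ and define $f_i^{(\delta)}$ as its square. With that repair the correct per-factor cost of Step 2 is $(\hbar/\delta^2)^{1/2}$, not your $(\hbar/\delta^2)^{1/4}$ (you count the Powers--St{\o}rmer square root twice), and balancing $\delta^{1/2}$ against $\hbar^{1/2}\delta^{-1}$ gives $\delta=\hbar^{1/3}$ and error $\bigo(\hbar^{1/6})$, formally even better than $\hbar^{1/8}$; this shows the claimed optimization producing exactly $1/8$ is an artifact of inconsistent bookkeeping rather than the source of the exponent. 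In short, the skeleton is plausible and, were the uniform exotic-class product estimates proved, would yield the theorem (indeed with a better exponent), but as written the central lemma of Step 2 is both unproven and mis-stated.
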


\medskip
\noindent
The proof is given in the Appendix written by Laurent Charles. A delicate point in this theorem is that the functions $f_i$ are non-smooth, which makes the standard results on the quantum-classical correspondence nonapplicable in this case.

Fix now any $m < 1/8$, and consider the simplicial complex with vertices $\{1,\dots,N\}$
such that the vector $I \in \Theta_k$ forms a simplex  whenever $p^Q_I > \hbar^m$. By Theorem \ref{thm-app}, this complex coincides with the nerve $L(\cU)$ for all sufficiently small $\hbar$.  Hence, for good covers, we inferred the homotopy type of the phase space.

\section{Discussion and further directions}
The subject of the present note is  related to manifold learning, a technique
of inferring topology of low-dimensional submanifolds of $\R^n$ from sufficiently dense
samples (see e.g. \cite{NSW}). The motivation comes from dimensionality reduction in
data analysis. While closed symplectic manifolds (i.e., phase spaces) do not naturally arise as submanifolds of linear spaces, many interesting examples appear as the {\it Marsden-Weinstein reduction} of symplectic manifolds equipped with a Hamiltonian Lie group action \cite{MW}. It would be interesting
to explore whether topology of a reduced phase space can be reconstructed from quantum data.

It is tempting to apply the technique of the present note to the tunneling effect discussed in the introduction. Assume that we are given a classical Hamiltonian $g$ on $M$. Let $\xi_\hbar$ be a generic linear combination of pure eigenstates of the quantum Hamiltonian $T_\hbar(g)$ with the eigenvalues from a sufficiently small neighborhood of $\lambda$, where $\lambda$ is a regular value of $g$. Suppose that the initial state of the system equals $\xi_\hbar$. Can one infer homology of the level set $\{g=\lambda\}$ from the statistics of consecutive registrations with respect to a sufficiently dense system of sensors?

Another natural question is whether one can infer {\it symplectic topology} of quantum phase space.
While reconstructing the symplectic structure seems to be out of reach, one can detect certain quantum footprints of symplectic rigidity, see e.g. \cite{P-EMS}.

Can one recover the homotopy type of the quantum phase space? In this note we
did this for quite special partitions of unity, see Sections \ref{subsec-why} and \ref{sec-nerve},
while for more general partitions considered in Section \ref{sec-tda}  we restricted ourselves to a  modest topological information such as homology with coefficients in a field.
Detecting the homotopy type in this generality might be plausible, cf. \cite{Latschev, NSW}, but seemingly
requires new ideas.

Let us mention, for the record, that for the standard  Berezin-Toeplitz quantization of closed
K\"{a}hler manifolds the dimension of the quantum Hilbert space $\Hilb_\hbar$ is given by the Hirzebruch-Riemann-Roch formula \cite{MM} which carries certain topological information. For instance,
if $M$ is a closed real surface of genus $g$ equipped with a quantizable symplectic form,
$$\dim \Hilb_\hbar = \frac{\text{Area}(M)}{2\pi\hbar}+ 1-g\;.$$ However, I am unable
to come up with any {\it gedankenexperiment} which enables one to detect $\dim \Hilb_\hbar$.

Finally, an intriguing link between topological data analysis and quantum mechanics, which goes
another way around, appears in \cite{tdaq}. In this paper, the authors propose a quantum algorithm for calculating persistent homology. It would be interesting to implement it for inferring topology of a quantum phase space along the lines of Section \ref{sec-tda} above. This might shed some light on how topological structures arise in an intrinsic way in matrix quantum mechanics.

\medskip
\noindent
{\bf Acknowledgement.} The work on this paper had started during my stay at University
of Chicago in the Winter, 2015. It was completed  during my visits as a Mercator Fellow to Universit\"{a}t zu K\"{o}ln and Ruhr-Universit\"{a}t Bochum in 2017. I am grateful to these institutions for their warm hospitality. I thank Shmuel Weinberger for useful discussions, as well as Laurent Charles, Yohann Le Floch, Vuka\v sin Stojisavljevi\'c and Jun Zhang for helpful comments on the manuscript. My special thanks to Laurent Charles for encouraging me to add Section \ref{sec-nerve} and writing the Appendix.

\appendix

\section{Toeplitz operators with piecewise constant symbol (by Laurent Charles) }
In Berezin-Toeplitz quantization, we consider a symplectic compact manifold
$M$, a set $\Lambda \subset \R_{>0}$ having $0$ as a limit point and for
any $\hbar \in \Lambda$, a Hermitian complex line bundle $L_\hbar$ and a
finite dimensional subspace $\mathcal{H}_{\hbar} $ of $\Ci (M,
L_{\hbar})$. The space $ \mathcal{H}_{\hbar} $ has a natural scalar product
$\langle \Psi, \Psi' \rangle_{\mathcal{H}_{\hbar}} = \int_M ( \Psi, \Psi')
\; d\liouv$ where $(\Psi, \Psi')$ is the pointwise scalar product  and $\liouv$ the Liouville measure \footnote{the Liouville measure
   of the previous sections  is $\liouvnorm (A) = \liouv (A) / \liouv(M)$} of $M$.
To any function $f \in \Ci (M)$, we associate an endomorphism $T_{\hbar} (f): \mathcal{H}_{\hbar} \rightarrow \mathcal{H}_\hbar$ such that
\begin{gather} \label{eq:integral_defining_Toeplitz}
 \langle T_{\hbar} (f)  \Psi, \Psi' \rangle_{\mathcal{H}_{\hbar}} = \int_M (\Psi, \Psi' )(x) f(x) d \liouv (x) , \qquad \Psi, \Psi' \in \mathcal{H}_{\hbar},
\end{gather}
When the spaces $\mathcal{H}_{\hbar}$ are conveniently defined, the family
$(T_{\hbar}, \hbar \in \La)$ satisfies usual semi-classical
properties. Typically, for a K\"ahler manifold $M$ equipped with a positive
line bundle $L$, we choose $\La :=  \{\hbar = 1/k, \; k \in \N^* \}$,
$L_{\hbar} := L^{\otimes k}$ and define $\mathcal{H}_{\hbar}$ as the
space of holomorphic sections of $L^k$. These definitions can be extended
to any quantizable $M$, cf. \cite{MM} or \cite{oim_symp} for instance. For
the purpose of this appendix, we only need that the reproducing kernel of $ \mathcal{H}_{\hbar}$ satisfies the two estimates (\ref{eq:diag}) and (\ref{eq:estimation_Bergman}).

In the definition (\ref{eq:integral_defining_Toeplitz}) of $T_{\hbar} (f)$, instead of a smooth function $f$, we can more generally consider any distribution $f \in \mathcal{C}^{-\infty} (M)$. Indeed, in equation (\ref{eq:integral_defining_Toeplitz}), the pointwise scalar product $(\Psi, \Psi' )$ is a smooth function, so the integral of $f$ against $(\Psi, \Psi' ) d \liouv$ still makes sense and defines an endomorphism $T_{\hbar} (f) $.
The map $T_{\hbar} :\mathcal{C}^{-\infty}(M) \rightarrow \op{End} ( \mathcal{H}_{\hbar})$ is linear and positive in the sense that $T_\hbar (f) \geqslant 0$ when $f \geqslant 0$. The question is whether the asymptotic properties of $T_{\hbar} (f)$ still holds, in particular the estimates of the trace and the product.

\begin{prop} \label{prop:trace}
For any $f \in \mathcal{C}^{-\infty} (M)$, we have
$$ \op{tr} ( T_\hbar (f) ) = ( 2\pi \hbar)^{-n} \Bigl( \int_M  f d\liouv \Bigr)   ( 1+ \bigo ( \hbar))$$
\end{prop}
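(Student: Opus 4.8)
The plan is to reduce the computation of the trace to a pairing of the distribution $f$ against the restriction to the diagonal of the reproducing kernel of $\mathcal{H}_\hbar$, and then to feed in the on-diagonal estimate (\ref{eq:diag}). First I would fix an orthonormal basis $(e_i)$ of the finite-dimensional space $\mathcal{H}_\hbar$ and evaluate the defining relation \eqref{eq:integral_defining_Toeplitz} on the diagonal: taking $\Psi=\Psi'=e_i$ gives
\[
\langle T_\hbar(f)e_i, e_i\rangle_{\mathcal{H}_\hbar} = \int_M (e_i,e_i)(x)\, f(x)\, d\liouv(x),
\]
an expression that, since $(e_i,e_i)$ is a smooth nonnegative function on $M$, makes literal sense for a distribution $f$. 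Summing over $i$ and using that the pointwise sum $E_\hbar(x):=\sum_i (e_i,e_i)(x)$ is the (basis-independent, smooth) diagonal of the reproducing kernel, I obtain
\[
\op{tr}(T_\hbar(f)) = \langle f, E_\hbar\rangle,
\]
a well-defined distributional pairing. This is the point where the non-smoothness of $f$ is rendered harmless: the kernel $E_\hbar$ that $f$ is tested against is smooth, so no regularity of $f$ beyond membership in $\mathcal{C}^{-\infty}(M)$ is used.

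Next I would invoke (\ref{eq:diag}) in the form $E_\hbar(x) = (2\pi\hbar)^{-n}\bigl(1 + r_\hbar(x)\bigr)$ with the remainder controlled in every $C^k$-norm, $\|r_\hbar\|_{C^k(M)} = \bigo(\hbar)$. Linearity of the pairing then gives
\[
\op{tr}(T_\hbar(f)) = (2\pi\hbar)^{-n}\Bigl( \langle f, 1\rangle + \langle f, r_\hbar\rangle\Bigr) = (2\pi\hbar)^{-n}\Bigl( \int_M f\, d\liouv + \langle f, r_\hbar\rangle\Bigr).
\]
Since $M$ is compact, $f$ has some finite order $k$ and satisfies $|\langle f, \phi\rangle| \le C_f \|\phi\|_{C^k(M)}$ for every test function $\phi$; applying this to $\phi = r_\hbar$ yields $|\langle f, r_\hbar\rangle| \le C_f \|r_\hbar\|_{C^k(M)} = \bigo(\hbar)$, with the implied constant depending on $f$. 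This is exactly the additive error $\bigo(\hbar\cdot \hbar^{-n})$, and when $\int_M f\, d\liouv \neq 0$ it rewrites as the claimed multiplicative form $\op{tr}(T_\hbar(f)) = (2\pi\hbar)^{-n}\bigl(\int_M f\, d\liouv\bigr)(1+\bigo(\hbar))$.

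The main obstacle is precisely the strength in which (\ref{eq:diag}) must be available. For a smooth symbol a merely uniform on-diagonal asymptotic expansion would close the argument, but to test $E_\hbar$ against an arbitrary distribution I need the leading term $(2\pi\hbar)^{-n}$ and its $\bigo(\hbar)$ correction to persist in $C^k$-norm for each $k$, so that the error term survives pairing with the derivatives implicit in a high-order distribution. Verifying that (\ref{eq:diag}) indeed furnishes this $C^k$-control of the diagonal is the crux; once it is in hand the proposition is immediate. I would also note that the off-diagonal bound (\ref{eq:estimation_Bergman}) is not needed for the trace and is reserved for the companion estimate on products of Toeplitz operators.
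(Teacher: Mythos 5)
Your proposal is correct and is essentially the paper's own proof: the paper likewise computes $\op{tr}(T_\hbar(f)) = \sum_i \langle T_\hbar(f)\Psi_i,\Psi_i\rangle = \int_M f(x) K_\hbar(x,x)\, d\liouv(x)$ in an orthonormal basis and concludes by the diagonal estimate (\ref{eq:diag}), invoking exactly the point you identify as the crux, namely that (\ref{eq:diag}) holds in the $\Ci$-topology so that the remainder survives pairing with a finite-order distribution. The only difference is presentational: you spell out the distributional bound $|\langle f, r_\hbar\rangle| \leq C_f \|r_\hbar\|_{C^k(M)} = \bigo(\hbar)$ that the paper leaves implicit.
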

The proof is an immediate generalization of the smooth case and will be given later. For the multiplicative properties of $T_{\hbar}$, the regularity is crucial. For instance let us recall two estimates proved in \cite{BaMaMaPi}. Let $R_{\hbar} (f,g) = T_\hbar( f) T_{\hbar} (g) - T_{\hbar} (fg)$. When $f$ and $g$ are both of class $\mathcal{C}^\ell$ with $\ell =1$ or $2$, $ \| R_{\hbar} (f,g) \|  = \bigo( \hbar^{\ell/2})$. When $f$ and $g$ are only assumed to be continuous,  $ \| R_\hbar ( f,g)\|$ tends to $0$ in the semiclassical limit $\hbar \rightarrow 0$. It is not proved that these estimates are sharp, but we believe they are.

Our goal is to extend these multiplicative properties to a subalgebra of $L^{\infty} ( M)$ containing the characteristic functions of smooth domains. By a smooth domain, we mean a 0-codimensional smooth submanifold with boundary. For any endomorphism $T$ of $\mathcal{H}_\hbar$, we introduce its Schatten norm normalized by the dimension $d(\hbar) = \dim \mathcal{H}_{\hbar}$,
$$ \| T \|_p := \Bigl( \frac{ \op{tr} |T|^d}{d (\hbar)} \Bigr)^{1/p} , \qquad p \in [1,\infty).$$
If $(T_\hbar)$ is a family of endomorphisms depending on $\hbar$, we write $T_{\hbar} = \bigo _p (\hbar ^m) $ for $\| T_{\hbar} \|_p = \bigo (\hbar^m)$.
For any measurable set $A$ of $M$, denote by $\chi_A \in L^{\infty} (M)$ its characteristic function.  We say that $A$ is a {\em good set} if
\begin{gather} \label{eq:good_set}
 T_{\hbar} ( \chi_A ) ^2 = T_{\hbar} ( \chi_A ) + \bigo_{1} ( \hbar^{1/2}) .
\end{gather}
We say that a function $f \in L^{\infty} (M)$ is a {\em simple function} if it has the form
\begin{gather} \label{eq:simple_function}
 f = \sum_{i=1}^{m} \la_i \chi_{A_i}
\end{gather}
where $m\in \N$, $\la_1$, \ldots, $\la_m$ are real numbers and $A_1$, \ldots, $A_m$ are good sets.

\begin{thm} \label{theorem}
\begin{enumerate}
\item \label{item:1} Any smooth domain of $M$ is a good set.
\item The good sets form an algebra, that is they are closed under taking complement, finite intersection and finite union.
\item If $f, g \in L^{\infty}(M)$ are simple functions, then $fg$ is simple and $$T_\hbar(f) T_\hbar (g) = T_{\hbar} (fg) + \bigo_2 ( \hbar^{1/4}).$$
\item If $f$ is simple and takes only non-negative values, then $f^{1/2}$ is simple and
$$T_\hbar(f)^{1/2} = T_{\hbar} ( f^{1/2} )  + \bigo_4 ( \hbar^{1/8}) .$$
  \end{enumerate}
\end{thm}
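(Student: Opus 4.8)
\emph{Strategy.} The plan is to route all four items through a single scalar quantity. Writing $P_A:=T_\hbar(\chi_A)$ and using positivity of $T_\hbar$ together with $T_\hbar(1)=\id$, one has $0\le P_A\le\id$, so $\mathrm{spec}(P_A)\subset[0,1]$ and $P_A-P_A^2\ge0$. Denoting by $K_\hbar(x,y)$ the reproducing kernel of $\Hilb_\hbar$ and by $|K_\hbar(x,y)|$ its pointwise norm, I would first establish the exact identity
\[
\op{tr}(P_A-P_A^2)=\int_A\int_{M\setminus A}|K_\hbar(x,y)|^2\,d\liouv(x)\,d\liouv(y)=:\beta_\hbar(A),
\]
obtained from $\op{tr}(P_A)=\int_A K_\hbar(x,x)\,d\liouv(x)$, the Hilbert--Schmidt formula $\op{tr}(P_A^2)=\int_A\int_A|K_\hbar(x,y)|^2$, and the reproducing identity $K_\hbar(x,x)=\int_M|K_\hbar(x,y)|^2\,d\liouv(y)$. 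Since $P_A-P_A^2\ge0$, this says $\|P_A^2-P_A\|_1=\beta_\hbar(A)/d(\hbar)$, so \emph{$A$ is a good set exactly when $\beta_\hbar(A)=\bigo(\hbar^{-n+\frac12})$}. For item (1) I would feed the off-diagonal Gaussian bound \eqref{eq:estimation_Bergman}, $|K_\hbar(x,y)|^2\le C\hbar^{-2n}e^{-d(x,y)^2/(C\hbar)}$, into $\beta_\hbar(A)$: integrating in $y$ first shows that a point $x\in A$ at distance $t$ from $\partial A$ contributes $\bigo(\hbar^{n}e^{-ct^2/\hbar})$, and the coarea formula on the tubular neighbourhood of the smooth hypersurface $\partial A$ then gives $\int_A e^{-ct^2/\hbar}\,d\liouv\sim\mathrm{Vol}(\partial A)\,\sqrt\hbar$; multiplying out yields $\beta_\hbar(A)=\bigo(\hbar^{-n+\frac12})$, with \eqref{eq:diag} used only to fix $d(\hbar)\asymp\hbar^{-n}$.

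\emph{Items (2) and (3).} The functional $\beta_\hbar$ is symmetric, $\beta_\hbar(M\setminus A)=\beta_\hbar(A)$, giving closure under complement (equivalently $P_{M\setminus A}=\id-P_A$). For intersections I would use bare monotonicity: from $A_1\cap A_2\subset A_1$ and $(A_1\cap A_2)^c=A_1^c\cup A_2^c$ one gets $\beta_\hbar(A_1\cap A_2)\le\beta_\hbar(A_1)+\beta_\hbar(A_2)$, so goodness is inherited, and unions follow by de~Morgan. For item (3), bilinearity reduces $T_\hbar(f)T_\hbar(g)-T_\hbar(fg)$ to $f=\chi_A,g=\chi_B$ with $A,B$ good (and $fg$ is simple since $A\cap B$ is good). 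Writing $\chi_A=\chi_{A\cap B}+\chi_{A\setminus B}$ and $\chi_B=\chi_{A\cap B}+\chi_{B\setminus A}$ splits the difference into four terms built from the pairwise disjoint good sets $A\cap B$, $A\setminus B$, $B\setminus A$. Two inputs finish it: the diagonal estimate $\|P_C^2-P_C\|_2=\bigo(\hbar^{1/4})$, which follows from item (1) and $0\le P_C-P_C^2\le\tfrac14\id$ via $\op{tr}((P_C-P_C^2)^2)\le\tfrac14\op{tr}(P_C-P_C^2)$; and a disjointness lemma, $\|T_\hbar(\chi_X)T_\hbar(\chi_Y)\|_2=\bigo(\hbar^{1/4})$ for disjoint good $X,Y$, proved by writing $\|T_\hbar(\chi_X)T_\hbar(\chi_Y)\|_2^2=d(\hbar)^{-1}\op{tr}(P_YP_X^2P_Y)$, replacing $P_X^2,P_Y^2$ by $P_X,P_Y$ modulo $\bigo_1(\hbar^{1/2})$, and bounding the surviving term $d(\hbar)^{-1}\op{tr}(P_XP_Y)=d(\hbar)^{-1}\int_X\int_Y|K_\hbar|^2\le d(\hbar)^{-1}\beta_\hbar(X)=\bigo(\hbar^{1/2})$ using $Y\subset X^c$.

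\emph{Item (4).} Passing to the common refinement into disjoint good atoms (an algebra by (2)), I may write $f=\sum_k c_k\chi_{E_k}$ with disjoint good $E_k$ and $c_k\ge0$, whence $f^{1/2}=\sum_k\sqrt{c_k}\,\chi_{E_k}$ is simple. Put $G:=T_\hbar(f^{1/2})\ge0$ and $P:=T_\hbar(f)^{1/2}\ge0$. Item (3) applied to the pair $(f^{1/2},f^{1/2})$ gives $G^2=T_\hbar(f)+\bigo_2(\hbar^{1/4})$, i.e. $G^2-P^2=S$ with $\|S\|_2=\bigo(\hbar^{1/4})$. It then remains to pass from squares to the operators themselves through the H\"older-type bound $\|G-P\|_4\le\|G^2-P^2\|_2^{1/2}$, a Schatten-norm form of the Birman--Koplienko--Solomyak/Powers--St\o rmer inequality for the operator square root; this yields $\|G-P\|_4=\bigo(\hbar^{1/8})$, proving (4). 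The chain of exponents $\tfrac12\to\tfrac14\to\tfrac18$ simply records that each transition ($\|\cdot\|_1\!\to\!\|\cdot\|_2$ via boundedness of $P_C-P_C^2$, then $\|\cdot\|_2\!\to\!\|\cdot\|_4$ via the $\tfrac12$-H\"older square root) costs a square root of the rate.

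\emph{Main obstacle.} The one genuinely analytic step is item (1): turning the off-diagonal kernel estimate into the boundary-layer bound with the sharp power $\hbar^{1/2}$. Everything downstream is soft once the identity $\op{tr}(P_A-P_A^2)=\beta_\hbar(A)$ is available --- the subadditivity in (2) and the disjointness lemma in (3) are formal, and (4) is an off-the-shelf operator inequality. The delicate points in (1) are the uniformity of the Gaussian decay across $M$ and the coarea estimate on the $\sqrt\hbar$-collar of $\partial A$, which is exactly where the smoothness of the domain is indispensable.
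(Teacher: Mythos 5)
Your proposal is correct, and its overall architecture coincides with the paper's: your $\beta_\hbar(A)$ identity is exactly Lemma \ref{lem:good_set_characterisation} (derived there via $T_{A^c}=1-T_A$ and the trace formula (\ref{eq:5}) rather than the reproducing identity, an immaterial difference); assertion 1 is the same $\sqrt{\hbar}$-boundary-layer Gaussian estimate (you use the distance to $\partial A$ and the coarea formula where the paper works in adapted coordinates and integrates over the wedge $0\le s_1\le t_1$); assertion 2 uses the same inclusion $(A\cap B)\times (A\cap B)^c\subset (A\times A^c)\cup(B\times B^c)$; and assertion 4 is the identical operator-monotonicity/Birman--Koplienko--Solomyak step $\|\sqrt{S}-\sqrt{T}\|_4\le\|S-T\|_2^{1/2}$ applied to $S=T_\hbar(f^{1/2})^2$, $T=T_\hbar(f)$. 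The one place you genuinely depart from the paper is assertion 3. The paper first proves $d(\hbar)^{-1}\op{tr}(T_AT_B)=\liouvnorm(A\cap B)+\bigo(\hbar^{1/2})$ (its (\ref{eq:trace_prod_good_sets})), then expands $\|T_AT_B-T_C\|_2^2$, with $C=A\cap B$, into the four traces of $T_A^2T_B^2$, $T_C^2$, $T_AT_BT_C$, $T_CT_BT_A$, and shows each equals $\liouvnorm(C)+\bigo(\hbar^{1/2})$ so that the leading terms cancel; the estimate of $\op{tr}(T_AT_BT_C)$ is the most laborious step, requiring a mixed H\"older bound on $\op{tr}(T_aT_bT_C^2)$. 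You instead use the exact operator identity $T_AT_B-T_C=(T_C^2-T_C)+T_aT_C+T_CT_b+T_aT_b$ and bound each summand in $\|\cdot\|_2$ by $\bigo(\hbar^{1/4})$: the first via $0\le T_C-T_C^2\le\tfrac14$ together with goodness, the others via your disjointness lemma $\|T_XT_Y\|_2^2=d(\hbar)^{-1}\op{tr}(T_X^2T_Y^2)=d(\hbar)^{-1}\op{tr}(T_XT_Y)+\bigo(\hbar^{1/2})=\bigo(\hbar^{1/2})$ for disjoint good $X,Y$. This is a genuine simplification: no asymptotic value of any trace is needed, so the cancellation bookkeeping and the $\op{tr}(T_AT_BT_C)$ estimate disappear, while your only inputs are the kernel formula (\ref{eq:5}) and the H\"older estimate (\ref{eq:tr_estimate}) that the paper uses anyway; both routes produce the same chain of exponents $\hbar^{1/2}\to\hbar^{1/4}\to\hbar^{1/8}$.
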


Interestingly, only the first assertion relies on the estimates  (\ref{eq:diag}) and (\ref{eq:estimation_Bergman}) of the
Bergman kernel.   The proof of the other assertions is independent and does
not use any difficult result.

\begin{cor}  \label{corollary}
 Let $f_1$, \ldots, $f_m$ be $m$ simple non-negative functions. Let $P_\hbar = T_\hbar(f_1)^{1/2} \ldots T_\hbar(f_m)^{1/2}$. Then
$$ \frac{1}{d(\hbar)} \op{tr} (P_\hbar^* P_{\hbar}) = \frac{1}{ \liouv (M)}\int_M f_1 \ldots f_m \; d \liouv + \bigo ( \hbar^{1/8}) .$$
\end{cor}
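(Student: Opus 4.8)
The plan is to derive Corollary \ref{corollary} from the two multiplicative assertions of Theorem \ref{theorem} (parts 3 and 4) by a telescoping argument, controlling errors through the normalized Schatten norms $\|\cdot\|_p$ together with a Hölder-type inequality. The key structural fact I would use throughout is that $\|\cdot\|_p$ behaves multiplicatively in the Hölder sense: for the dimension-normalized Schatten norms one has
\begin{gather}\label{eq-holder}
\| ST \|_r \leqslant \| S \|_p \, \| T \|_q, \qquad \tfrac{1}{r} = \tfrac{1}{p} + \tfrac{1}{q},
\end{gather}
which follows from the ordinary (unnormalized) Hölder inequality for Schatten norms upon dividing by the appropriate power of $d(\hbar)$. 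I would also record that $\|\cdot\|_p$ is monotone in $p$ on these normalized norms (so $\| T \|_2 \leqslant \| T \|_4$, etc.), and that $\| T_\hbar(f) \|_\infty$ is bounded uniformly in $\hbar$ for bounded $f$ since $T_\hbar$ is positive with $T_\hbar(1)=\id$.

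First I would rewrite the trace in the statement as a genuine product rather than $P_\hbar^* P_\hbar$. Writing $G_i := T_\hbar(f_i)^{1/2}$, the quantity $\tfrac{1}{d(\hbar)}\op{tr}(P_\hbar^* P_\hbar)$ is $\tfrac{1}{d(\hbar)}\op{tr}(G_m \cdots G_2 G_1 G_1 G_2 \cdots G_m)$, which by cyclicity equals $\tfrac{1}{d(\hbar)}\op{tr}\bigl(G_1 G_2 \cdots G_m^2 \cdots G_2 G_1\bigr)$ — precisely the symmetric ordering appearing in \eqref{nerve-2}. The strategy is then to collapse this long product down to $T_\hbar(f_1 \cdots f_m)$ step by step. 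Using part 4 of Theorem \ref{theorem}, I would first replace each $G_i = T_\hbar(f_i)^{1/2}$ by $T_\hbar(f_i^{1/2})$ at the cost of an $\bigo_4(\hbar^{1/8})$ error in each factor; propagating these $2m$ substitutions through the product via \eqref{eq-holder}, each error term paired with bounded $\|\cdot\|_\infty$-factors contributes $\bigo_1(\hbar^{1/8})$ to the normalized trace. Next, using part 3 repeatedly, I would merge adjacent Toeplitz factors $T_\hbar(g)T_\hbar(h) = T_\hbar(gh) + \bigo_2(\hbar^{1/4})$, peeling off one product at a time from the outside in; since $1/8 < 1/4$, these merging errors are dominated by the $\hbar^{1/8}$ terms already present. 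At the end one is left with $\tfrac{1}{d(\hbar)}\op{tr}\bigl(T_\hbar(f_1^{1/2} f_2^{1/2}\cdots f_m \cdots f_2^{1/2} f_1^{1/2})\bigr) = \tfrac{1}{d(\hbar)}\op{tr}(T_\hbar(f_1 \cdots f_m))$, and Proposition \ref{prop:trace} converts this into $\tfrac{1}{\liouv(M)}\int_M f_1 \cdots f_m\, d\liouv \cdot (1 + \bigo(\hbar))$.

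The main obstacle, and the step requiring genuine care, is the bookkeeping of how the individual $\bigo_4(\hbar^{1/8})$ and $\bigo_2(\hbar^{1/4})$ errors accumulate when inserted into a product of length $2m$ inside a normalized trace. The subtlety is that $\op{tr}$ is controlled by the $\|\cdot\|_1$ norm (after normalization), so each error factor must be paired with the remaining bounded factors using \eqref{eq-holder} in such a way that the total Hölder exponents sum correctly to $1$; one must verify that replacing a single factor in a product $A_1 \cdots A_{2m}$ by $A_i + E_i$ produces a normalized-trace error bounded by $\|E_i\|_p$ times a product of $\|\cdot\|_\infty$-norms of the other factors, and that $p=4$ (respectively $p=2$) suffices because the single error factor can absorb the Hölder index while the bounded factors are taken in $\|\cdot\|_\infty$. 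Since $m$ is fixed and the number of substitutions is bounded, the cumulative error remains $\bigo(\hbar^{1/8})$, with the weakest exponent $1/8$ dominating. I would also need to confirm, via part 4, that each $f_i^{1/2}$ is itself a simple function so that parts 3 and 4 apply to the intermediate expressions — this is exactly the closure statement guaranteed in Theorem \ref{theorem}.
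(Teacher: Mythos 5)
Your proposal is correct and takes essentially the same route as the paper's own proof: replace each factor $T_\hbar(f_i)^{1/2}$ by $T_\hbar(f_i^{1/2})$ via assertion 4 of Theorem \ref{theorem}, collapse the resulting product of Toeplitz operators to $T_\hbar(f_1\cdots f_m)$ via assertion 3, and conclude with Proposition \ref{prop:trace}. The H\"older bookkeeping you spell out (one error factor in a Schatten norm, the remaining factors in operator norm) is precisely what the paper packages in its trace estimate \eqref{eq:tr_estimate}.
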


\noindent{\bf Remark}
It is essential that we use a Schatten norm in the definition (\ref{eq:good_set}) of a good set. Indeed, for any measurable set $A$,  $0 \leqslant T_\hbar ( \chi_A) \leqslant 1 $ so $ 0 \leqslant T_{\hbar} ( \chi_A ) - T_{\hbar} ( \chi_A ) ^2 \leqslant 1/4$. When $A$ is a good domain such that $A$ and its complement are non empty, we will prove in a forthcoming paper that $T_{\hbar} ( \chi_A)$ has an eigenvalue $\la ( \hbar)$ converging to $1/2$ when $\hbar \rightarrow 0$. So $$ \|  T_{\hbar} ( \chi_A ) ^2 - T_{\hbar} ( \chi_A ) \| \rightarrow 1/4.$$
The curious reader can think to the case where $M$ is the two-sphere and
$A$ a hemisphere. In this case, we can explicitely compute the spectrum of
$T_{\hbar} (\chi_A)$ as we learned from \cite{BB}, cf. also \cite{Leonid_Barron}. \qed


\medskip
\noindent
{\bf Acknowledgement.}
I would like to thank Benoit Dou\c{c}ot and Benoit Estienne for
discussions on related subjects, and Leonid Polterovich for giving me the
opportunity to write this appendix.

\section*{Proofs}

Let $( \Psi_i)$ be an orthonormal basis of $\mathcal{H}_{\hbar}$ and define the Bergman kernel
\begin{gather}
\label{eq:bergman_kernel}
 K_{\hbar} (x,y) = \sum_{i =1}^{d( \hbar)} \Psi_i (x) \otimes \con{\Psi}_i ( y) \in L_x^k \otimes \con{L}_y^k , \qquad x,y \in M
\end{gather}
where $\con{L}$ is the conjugate line bundle of $L$. We will need the diagonal estimate
\begin{gather} \label{eq:diag}
 K_{\hbar} (x,x) = (2\pi \hbar )^{-n} ( 1 + \bigo ( \hbar )) ,
\end{gather}
where we identify $L_x^k \otimes \con{L}_x^k$ with $\C$ by using the metric of $L$. The second estimate we need is
\begin{gather} \label{eq:estimation_Bergman}
 |K_{\hbar} (x,y)| \leqslant C_m \hb^{-n} e^{-\hbar^{-1} d(x,y)^2/C } + C_m \hbar^{m}
\end{gather}
for any $m \in \N$ with some positive constants $C$ and $C_m$ independent of
$x,y$. Here $d$ is any distance of $M$ obtained by embedding $M$ into an
Euclidean space and restricting the Euclidean distance. In the K\"ahler
case, (\ref{eq:diag}) has been first proved in \cite{Bouche} and extended in
\cite{zelditch} to a convergence in $\Ci$-topology. Estimate (\ref{eq:estimation_Bergman}) follows from Corollary 1 of \cite{oim_op}.

\begin{proof}[Proof of Proposition \ref{prop:trace}]
We have
\begin{xalignat*}{2}
\op{tr} ( T_\hbar (f) ) = &  \sum_{i=1} ^{d(\hbar)} \langle T_{\hbar} (f) \Psi_i , \Psi_i \rangle  = \int_M f(x) K_\hbar (x,x) d\liouv (x) \\  = &
( 2\pi \hbar)^{-n} \Bigl( \int_M  f d\liouv \Bigr)   ( 1+ \bigo ( \hbar))
\end{xalignat*}
 by (\ref{eq:diag}) which holds in $\Ci$-topology.
\end{proof}

In the sequel, to lighten the notation, we set $T_{A} := T_{\hbar} ( \chi_A)$
for any measurable set $A$ of $M$. We denote by $A^c$ the complement of $A$.

\begin{lemma} \label{lem:good_set_characterisation}
A measurable set $A$ of $M$ is good if and only if
\begin{gather} \label{eq:good_set_characterisation}
\int_{ A \times A^{c} }| K_{\hbar}(x,y)|^2 d\liouv (x) \; d\liouv (y) = \bigo ( \hbar^{-n + 1/2})
\end{gather}
\end{lemma}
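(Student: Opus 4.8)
The plan is to reduce the operator condition \eqref{eq:good_set} to the kernel integral \eqref{eq:good_set_characterisation} by an explicit trace computation, the whole argument being powered by positivity. First I would record the two structural facts that make the absolute value in the Schatten norm disappear. Since $\chi_A$ is real-valued with $0 \leqslant \chi_A \leqslant 1$ and $T_{\hbar}$ is linear and positive, $T_A$ is self-adjoint with spectrum contained in $[0,1]$; and since $\chi_A + \chi_{A^c} = 1$ pointwise, the normalization $T_{\hbar}(1) = \id$ gives $T_{A^c} = \id - T_A$. Consequently
$$ T_A - T_A^2 = T_A(\id - T_A) = T_A T_{A^c}, $$
and because the eigenvalues of $T_A$ lie in $[0,1]$ this operator is positive semidefinite. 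Hence $|T_A^2 - T_A| = T_A T_{A^c}$, and its normalized trace norm is simply $\op{tr}(T_A T_{A^c})/d(\hbar)$, with no absolute values left to estimate. This is the conceptual heart of the reduction.

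Next I would compute the trace of a product of two Toeplitz operators in terms of the Bergman kernel. Expanding in the orthonormal basis $(\Psi_i)$ and using the defining integral \eqref{eq:integral_defining_Toeplitz}, the matrix entries are $\langle T_{\hbar}(f) \Psi_j, \Psi_i \rangle = \int_M f(x) (\Psi_j, \Psi_i)(x) \, d\liouv(x)$, so that
$$ \op{tr}(T_{\hbar}(f) T_{\hbar}(g)) = \int_{M \times M} f(x) g(y) \sum_{i,j} (\Psi_j, \Psi_i)(x) \, (\Psi_i, \Psi_j)(y) \, d\liouv(x) \, d\liouv(y). $$
The remaining task is to recognize the double sum as $|K_{\hbar}(x,y)|^2$: unwinding the Hermitian metric on $L_x^k \otimes \con{L}_y^k$ shows that $|K_{\hbar}(x,y)|^2 = \sum_{i,j} (\Psi_i, \Psi_j)(x) (\Psi_j, \Psi_i)(y)$, which coincides with the sum above after relabeling $i \leftrightarrow j$. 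Specializing to $f = \chi_A$, $g = \chi_{A^c}$ then yields
$$ \op{tr}(T_A T_{A^c}) = \int_{A \times A^c} |K_{\hbar}(x,y)|^2 \, d\liouv(x) \, d\liouv(y). $$

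Finally I would assemble the pieces. Combining the two displays gives
$$ \big\| \, T_A^2 - T_A \, \big\|_1 = \frac{1}{d(\hbar)} \int_{A \times A^c} |K_{\hbar}(x,y)|^2 \, d\liouv(x) \, d\liouv(y), $$
and Proposition \ref{prop:trace} applied to $f \equiv 1$ gives $d(\hbar) = (2\pi\hbar)^{-n} \liouv(M)(1 + \bigo(\hbar))$, so $d(\hbar)^{-1}$ is comparable to $\hbar^n$. Therefore the good-set condition $\| T_A^2 - T_A \|_1 = \bigo(\hbar^{1/2})$ is equivalent to the kernel integral being $\bigo(\hbar^{1/2} \cdot \hbar^{-n}) = \bigo(\hbar^{-n+1/2})$, which is exactly \eqref{eq:good_set_characterisation}. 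I do not expect a genuine obstacle in this lemma: the only step demanding care is the bookkeeping of the Hermitian structure on the conjugate line bundle $\con{L}_y^k$ in the trace identity, while the substantive analytic work — actually verifying the $\bigo(\hbar^{-n+1/2})$ bound for smooth domains — is separated off into the Bergman kernel estimates \eqref{eq:diag} and \eqref{eq:estimation_Bergman} and handled elsewhere.
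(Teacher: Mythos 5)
Your proposal is correct and follows essentially the same route as the paper's proof: positivity of $T_A$ and $0 \leqslant T_A \leqslant 1$ eliminate the absolute value in the trace norm, the trace of $T_A T_{A^c}$ is computed in an orthonormal basis and identified with $\int_{A\times A^c}|K_\hbar(x,y)|^2\,d\liouv\,d\liouv$, and the dimension asymptotics $d(\hbar)=(2\pi\hbar)^{-n}\liouv(M)(1+\bigo(\hbar))$ convert the $\bigo(\hbar^{1/2})$ condition into the stated $\bigo(\hbar^{-n+1/2})$ bound. The only cosmetic difference is that you make the identity $T_{A^c}=\id-T_A$ and the spectral argument explicit, where the paper states them implicitly.
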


\begin{proof}
Since $0 \leqslant T_A \leqslant 1$, $ T_A - T_A^2 \geqslant 0$. So
$$ \| T_A^2 - T_A \|_1 =  \frac{1}{d(\hbar)} \op{tr} ( T_A - T_A^2 ) = \frac{1}{d(\hbar)} \op{tr} ( T_A ( 1 - T_A)).$$
Using that $d(\hbar) = (2 \pi \hbar )^{-n} \liouv (M) ( 1+ \bigo ( \hbar))$, we see that $A$ is good if and only if $\op{tr} ( T_A ( 1 - T_A)) = \bigo ( \hbar^{-n +1/2})$. To conclude the proof, observe that for any measurable subsets $A$ and $B$
\begin{gather} \label{eq:5}
 \op{tr} (T_A T_B) = \int_{A \times B} | K_\hbar (x,y) |^2 d\liouv (x) \; d\liouv (y)
\end{gather}
Indeed, computing the trace in the orthogonal basis $(\Psi_i)$, we have
\begin{xalignat*}{2}
 \op{tr} (T_A T_B) = &  \sum_{i,j} \langle T_A \Psi_i , \Psi_j \rangle \langle T_B \Psi_j, \Psi_i \rangle \\
 = & \sum_{i,j}  \int_{A\times B} ( \Psi_i, \Psi_j ) (x) ( \Psi_j, \Psi_i )(y) \; d\liouv (x) \;  \liouv (y)
\end{xalignat*}
And by the definition (\ref{eq:bergman_kernel}) of the Bergman kernel,
\begin{xalignat*}{2}
|K_{\hbar} (x,y)|^2 = & \sum_{i,j} \bigl( \Psi_i ( x) \otimes \con{\Psi}_i ( y) , \Psi_j ( x) \otimes \con{\Psi}_j (y) \bigr) \\  = & \sum_{i,j}  (\Psi_i,\Psi_j ) (x) (\Psi_j, \Psi_i )(y)
\end{xalignat*}
which proves (\ref{eq:5}).
\end{proof}

\begin{proof}[Proof of assertion 1 of Theorem \ref{theorem}]
We will deduce from estimate (\ref{eq:estimation_Bergman}) that any smooth domain $A$ of $M$  satisfies (\ref{eq:good_set_characterisation}). Introduce a finite covering $(U_\al, \; 1 \leqslant \al \leqslant N)$ of $M$ such that each $U_{\al}$ is the domain of a coordinate system $(x_i)$ in which $A \cap U_{\al}  = \{ x \in U_\al ; \; x_1(x) \geqslant 0 \}$. Denote by $\Delta$ the diagonal of $M^2$. Let $(f_\al , \; 0\leqslant \al \leqslant N)$ be a partition of unity of $M^2$ subordinated to the cover $ ( \Delta^c, U_1^2, \ldots , U_m^2 )$. It suffices to show that for each $\al$
\begin{equation} \label{eq:goal}
\int_{A \times A^c} f_{\al}(x,y) |K_{\hbar}(x,y) |^2 \; d\liouv (x) \; d\liouv (y) = \bigo ( \hb^{-n+1/2})
\end{equation}
For $\al =0$, this follows from the fact that $\op{supp} f_0 \cap \Delta = \emptyset$, so that $|K_{\hbar} (x,y) | = \bigo ( \hbar^{-\infty})$ uniformly on $\op{supp} f_0$. Let $\al \geqslant 1$ and choose a coordinate system $(x_i)$ on $U_\al$ as above. Introduce the coordinate system of $U_\al^2$
$$ s_i (x,y) = x_i (x), \qquad t_i ( x,y) = x_i ( x) - x_i ( y), \qquad x,y \in U_{\al}$$ Then by (\ref{eq:estimation_Bergman}) there exists a constant $C$ such that for any $(x,y) \in \op{supp} f_\al$, we have
$$ | K_\hbar|^2 \leqslant C \hb^{-2n} e^{-\hbar^{-1}  |t|^2 /C} + C \hbar^{-n+ 1/2}$$
where $|t|^2 = \sum t_i^2$.
Furthermore $(A \times \overline{A^c}) \cap U_\al^2 = \{ 0 \leqslant s_1 \leqslant t_1 \}$. So the integral in (\ref{eq:goal}) is bounded above by
$$  \hbar^{-2n} \int_{   0 \leqslant s_1 \leqslant t_1, \; |s'|_\infty \leqslant M }  e^{-\hbar^{-1} |t|^2 /C} \; ds \; dt  + \bigo ( \hbar^{-n+1/2})$$
where $|s'|_{\infty} = \max ( |s_2|, \ldots , | s_{2n -1}|)$ and $M$ is chosen so that the support of $f_\al$ is contained in $\{ |s'|_{\infty} \leqslant M \}$. Integrating with respect to the $s_i$'s and doing the change of variable $t= t \hbar^{-1/2}$, we obtain
\begin{gather*}
 \hbar^{-2n} \int_{   0 \leqslant s_1 \leqslant t_1, \; |s'|_{\infty} \leqslant M}  e^{-\hbar^{-1}  |t|^2 /C} \; ds \; dt  =  \hbar^{-2n} (2M)^{2n-1}  \int_{   0 \leqslant t_1} t_1  e^{-\hbar^{-1}  |t|^2 /C}  dt \\   = \hbar^{-n+1/2} (2M)^{2n-1} \int_{ \R_+ \times \R^{2n-1} } t_1 e ^{-|t|^2/C} dt
\end{gather*}
So the estimate (\ref{eq:goal}) is satisfied.
\end{proof}

\begin{proof}[Proof of assertion 2 of Theorem \ref{theorem}]
Since $T_{A^c} = 1 - T_A$, $T_A - T_A^2 = T_{A^c} - T_{A^c}^2$, so $A$ is a good set if and only if $A^c$ is a good set. The intersection of two good sets $A$, $B$ is good because of
$$ ( A\cap B) \times (A \cap B)^c \subset ( A \times A^c) \cup (B \times B^c) . $$
and Lemma \ref{lem:good_set_characterisation}
\end{proof}

\begin{proof}[Proof of assertion 3 of Theorem \ref{theorem}]
It suffices to prove that for any good sets $A$ and $B$,
\begin{gather} \label{eq:amontrer}
T_A T_B = T_{A \cap B} + \bigo_2 ( \hbar^{1/2})
\end{gather}
We first prove that
\begin{gather} \label{eq:trace_prod_good_sets}
 d(\hbar)^{-1}  \op{tr} (T_A T_B) = \liouvnorm (A \cap B) + \bigo ( \hbar^{1/2})
\end{gather}
with $\liouvnorm (A \cap B) = \liouv ( A \cap B) /\liouv (M)$.
Introduce the good sets $a = A \setminus B$, $b = B \setminus A$ and $C = A \cap B$. Then $ A$ is the disjoint union of $a$ and $C$, so $T_A = T_a + T_C$. In the same way, $T_B = T_b + T_C$. So
\begin{gather} \label{eq:1}
 T_A T_B = T_a T_b + T_a T_C + T_C T_b + T_C^2.
\end{gather}
Since $a$ and $b$ are disjoint, we deduce from (\ref{eq:5}) that
$$ 0 \leqslant \op{tr} (T_a T_b) \leqslant \op{tr}(T_a ( 1-T_a)).$$
$a$ being good, we obtain that $d(\hbar)^{-1} \op{tr} (T_a T_b) = \bigo ( \hbar^{1/2})$. By the same argument, $d(\hbar)^{-1} \op{tr} (  T_a T_C) = \bigo ( \hbar^{1/2})$ and $d(\hbar)^{-1} \op{tr}   (T_C T_b) = \bigo ( \hbar^{1/2})$. Finally $C$ being good,
$$ d(\hbar)^{-1} \op{tr} T_C^2 =d(\hbar)^{-1} \op{tr} T_C + \bigo
( \hbar^{1/2}) =  \liouvnorm (C) + \bigo ( \hbar^{1/2})$$
by Proposition \ref{prop:trace}. Summing the various estimates, we get (\ref{eq:trace_prod_good_sets}).

Second, we compute the Hilbert-Schmidt norm of $T_A T_B$. We will use the following consequence of H\"older inequality: for any $\hbar$-dependent endomorphisms $S$, $S'$, $T$ of $\mathcal{H}_{\hbar}$,
\begin{xalignat}{2} \label{eq:tr_estimate}
\begin{split} S  & = S' + \bigo_p(\hbar^m) \text{ and } \| T \| = \bigo (1) \\
& \Rightarrow \quad \frac{1}{d(\hbar)} \op{tr} (ST) =  \frac{1}{d(\hbar)} \op{tr} (S'T) + \bigo ( \hbar^m).
\end{split}
\end{xalignat}
 We have:
\begin{xalignat}{2} \notag
 \| T_A T_B \|^2_2   & = d(\hbar)^{-1} \op{tr} (T_A^2 T_B^2) \\ \notag
& = d(\hbar)^{-1} \op{tr} (T_A^2 T_B) + \bigo ( \hbar^{1/2}) \text{ because $B$ is good and (\ref{eq:tr_estimate})} \\ \notag
& =  d(\hbar)^{-1}  \op{tr} (T_A T_B) + \bigo ( \hbar^{1/2}) \text{ because $A$ is good and (\ref{eq:tr_estimate})} \\ &  = \liouvnorm ( A\cap B)   + \bigo ( \hbar^{1/2}) \text{ by (\ref{eq:trace_prod_good_sets})} \label{eq:intermediaire}
\end{xalignat}
Now we come to the proof of (\ref{eq:amontrer}). With $C = A \cap B$, we have
$$ \| T_A T_B - T_C \|_2^2 = d(\hbar)^{-1} \op{tr} ( T_A^2T_B^2 + T_C^2 - T_A T_B T_C - T_C T_B T_A ).$$
By (\ref{eq:intermediaire}), we have
\begin{gather} \label{eq:7}
 d(\hbar)^{-1}  \op{tr} ( T_A^2T_B^2)=  \liouvnorm (C)   + \bigo ( \hbar^{1/2}), \quad d(\hbar)^{-1} \op{tr} T_C^2 =  \liouvnorm (C)   + \bigo ( \hbar^{1/2}).
\end{gather}
To estimate the trace of $T_A T_B T_C$, we use as above the sets $a = A \setminus B$ and $b = B \setminus A$. We have by (\ref{eq:1})
$$ T_A T_B T_C =  T_a T_b T_C + T_a T_C^2 + T_C T_bT_C + T_C^3$$
Using that $C$ is good and (\ref{eq:tr_estimate}), we have
$$d(\hbar)^{-1} \op{tr} (T_C^3) =  d(\hbar)^{-1} \op{tr}(T_C^2) + \bigo ( \hbar^{1/2}) = \liouvnorm ( C) + \bigo ( \hbar^{1/2})$$
by (\ref{eq:trace_prod_good_sets}). Similarly, using that $C$ is good, (\ref{eq:tr_estimate}) and (\ref{eq:trace_prod_good_sets}), we have
$$  d(\hbar)^{-1} \op{tr} ( T_a T_C^2 ) =  d(\hbar)^{-1} \op{tr} ( T_a T_C) + \bigo ( \hbar^{1/2}) = \bigo ( \hbar^{1/2})$$
because $a\cap C = \emptyset$. By the same argument, $ d(\hbar)^{-1} \op{tr} ( T_b T_C^2 ) = \bigo ( \hbar^{1/2})$. Finally, $C$ being good, $ d(\hbar)^{-1}  \op{tr} (T_a T_b T_C) = d(\hbar)^{-1} \op{tr} (T_a T_b T_C^2) + \bigo ( \hbar^{1/2})$ and  by H\"older inequality,
$$ \Bigl| \frac{1}{d(\hbar)} \op{tr} (T_a T_b T_C^2) \Bigr| \leqslant \| T_CT_a \|_2 \| T_b T_C \|_2= \bigo ( \hbar^{1/4} ) \bigo ( \hbar^{1/4}) = \bigo ( \hbar^{1/2}) $$
where we have applied (\ref{eq:intermediaire}) to $C,a$ and $b,C$ and used the fact that these sets are mutually disjoint.   Gathering these estimates we conclude that
\begin{gather} \label{eq:6}
 d(\hbar)^{-1} \op{tr} ( T_A T_B T_C) = \liouvnorm (C) + \bigo ( \hbar^{1/2}).
\end{gather}
Exchanging $A$ and $B$, we get the same estimate for $d(\hbar)^{-1}  \op{tr} ( T_B T_a T_C)$. Now (\ref{eq:6}) and (\ref{eq:7}) imply that $\| T_A T_B - T_C \|_2^2 = \bigo ( \hbar^{1/2})$.
\end{proof}

\begin{proof}[Proof of assertion 4 of Theorem \ref{theorem}]
Since the good sets are closed under taking complement and finite intersections, we see that any simple function $f$ may be written as a sum (\ref{eq:simple_function}) with the $A_i$ being mutually disjoint good sets. We can assume furthermore that these sets are non-empty. Assume that $f$ is non-negative, then all the coefficients $\la_i$ are non-negative. And we deduce that $f^{1/2}$ is simple. Set $S_{\hbar} (f) = T_{\hbar}(f^{1/2})^2$. Then by the third assertion of Theorem \ref{theorem}, we have
\begin{gather} \label{eq:carre}
 S_{\hbar} (f) = T_{\hbar} (f) + \bigo _2 ( \hbar^{1/4})
\end{gather}
Using that the square root is an operator monotone function, we have
$$ \| \sqrt{S_{\hbar} (f) } - \sqrt{T_{\hbar} (f) } \|_4 \leqslant \bigl\| \sqrt{|S_\hbar(f) - T_{\hbar} (f) |} \bigr\|_4  = \| S_\hbar(f) - T_{\hbar} (f)  \|_2^{1/2}$$
and the right hand-side is a $\bigo ( \hbar^{1/8})$ by (\ref{eq:carre}). In other words,  $ T_\hbar ( f^{1/2}) = T_{\hbar} ( f) ^{1/2} + \bigo_4 ( \hbar ^{1/8}).$
\end{proof}

\begin{proof}[Proof of corollary \ref{corollary}]
 By assertion 4 of Theorem \ref{theorem} and (\ref{eq:tr_estimate}), we have
$$ \frac{1}{d(\hbar)} \op{tr} ( P_\hbar^* P_\hbar ) = \frac{1}{d(\hbar)} \op{tr} ( Q_\hbar^* Q_{\hbar} ) + \bigo ( \hbar^{1/8})$$
with $Q_{\hbar} = T_{\hbar} (f_1) \ldots T_{\hbar} ( f_m)$.
By assertion 3 of Theorem \ref{theorem},  $$Q_\hbar^* Q_\hbar = T_{\hbar} ( f_1 \ldots f_m) + \bigo ( \hbar^{1/4}).$$ So by (\ref{eq:tr_estimate}) and Proposition \ref{prop:trace},
$$ \frac{1}{d(\hbar)} \op{tr} ( Q_\hbar^* Q_\hbar ) = \frac{1}{\liouv (M) } \int_M f_1 \ldots f_m \; d\liouv + \bigo ( \hbar^{1/4})$$
and the result follows.
\end{proof}

\bigskip

\noindent
\begin{tabular}{ll}
Laurent Charles & Leonid Polterovich \\
 UMR 7586, Institut de Math\'{e}matiques  & Faculty of Exact Sciences \\
de Jussieu-Paris Rive Gauche &  School of Mathematical Sciences\\
Sorbonne Universit\'{e}s, UPMC Univ Paris 06 & Tel Aviv University \\
F-75005, Paris, France & 69978 Tel Aviv, Israel \\
laurent.charles@imj-prg.fr & polterov@post.tau.ac.il\\
\end{tabular}

\end{document}